\newtheorem{theorem}{Theorem}[section]
\newtheorem{proposition}[theorem]{Proposition}
\newtheorem{lemma}[theorem]{Lemma}
\newtheorem{corollary}[theorem]{Corollary}
\theoremstyle{definition}
\theoremstyle{remark} \newtheorem{remark}[theorem]{Remark}
\numberwithin{equation}{section}
\DeclareMathOperator{\id}{id} 
 \DeclareMathOperator{\Homeo}{Homeo}
\DeclareMathOperator{\PSL}{PSL} 
\newcommand{\Z}{{\mathbb{Z}}}\newcommand{\Del}{\mathbb{D}}
\newcommand{\C}{{\mathbb{C}}}
\newcommand{\R}{{\mathbb{R}}}
\newcommand{\mL}{\mathcal{L}}
\newcommand{\B}{\mathcal{B}}
\newcommand{\mM}{\mathcal{M}}
\newcommand{\F}{\mathcal{F}}
\newcommand{\mC}{\mathcal{C}}
\newcommand{\bk}{\backslash}
\newcommand{\pa}{\partial}
\newcommand{\Pa}{\mathbb{P}}
\newcommand{\ov}{\overline}
\newcommand{\vep}{\varepsilon}
\newcommand{\z}{\bar{z}}
\begin{document}
\title{Conformal weldings and Dispersionless Toda hierarchy}
\author{ Lee-Peng Teo}\address{Faculty of Information Technology \\
Multimedia University \\ Jalan Multimedia, Cyberjaya\\
63100, Selangor\\Malaysia} \keywords{conformal welding, dispersionless Toda hierarchy, $C^1$ homeomorphisms,  tau function, Grunsky coefficients} \email{lpteo@mmu.edu.my}

\begin{abstract}
Given a $C^1$ homeomorphism of the unit circle $\gamma$, let $f$ and $g$ be respectively the normalized conformal maps from the unit disc and its exterior so that $\gamma=  g^{-1}\circ f$ on the unit circle. In this article, we show that by suitably defined time variables, the evolutions of the pairs $(g, f)$ and $(g^{-1}, f^{-1})$ can be described by an infinite set of nonlinear partial differential equations known as dispersionless Toda hierarchy. Relations to the integrable structure of conformal maps first studied by Wiegmann and Zabrodin \cite{WZ} are discussed. An extension of the hierarchy which contains both our solution and the solution of \cite{WZ} is defined.
\end{abstract}
 \maketitle

\section{Introduction}

Given a $C^1$ homeomorphism of the unit circle $S^1$, the conformal welding or sewing problem looks for a conformal
map $f$ from the unit disc $\Del$  and a conformal map $g$
from the exterior of the unit disc $\Del^*$ so that $\gamma=g^{-1}\circ f$ on the unit circle. We assume that the origin is contained in
$\Omega^+=f(\Del)$ and $\infty$ is contained in $\Omega^-=g(\Del^*)$. Under the conditions $f(0)=0$,  $g(\infty)
=\infty$, $f'(0)g'(\infty)=1$, this problem have a unique solution $(f,
g)$.

Starting from the work of Wiegmann and Zabrodin \cite{WZ}, the
integrable structure of conformal mappings has aroused
considerable interest \cite{KKMWZ, LT, MWZ, 1, 2, 3, 4, 5, 6, 7, 8, 9, 10, 11, 12, 13}. It was found that
the evolution of conformal mappings can be described by an
infinite set of nonlinear partial differential equations, put together
under the name dispersionless Toda hierarchy. Later it was
revealed that it is also closely related to the Dirichlet boundary
problem and two dimensional inverse potential problem \cite{MWZ,
Z}. Putting more precisely, Wiegmann and Zabrodin defined a set
of time variables on the space of analytic curves. They showed
that the deformation of the conformal mapping $g$ of the exterior
domain of an analytic curve with respect to these time variables satisfies
the dispersionless limit of $2$D Toda chain hierarchy. They also
defined the notion of tau function for analytic curves, which is
the tau function for the hierarchy. By a suitable modification, it
was realized that the deformation of the interior conformal
mapping $f$ can also be described by the dispersionless Toda
hierarchy \cite{MWZ}. However, in their approach, the interior
mapping $f$ and exterior mapping $g$ are treated separately, i.e. using
different time coordinates. Dispersionless Toda hierarchy \cite{TT1} can be
interpreted as describing the evolution of the coefficients of two
formal power series $(\mL, \tilde{\mL})$ with respect to a set of
formal time variables $t_n$, $n\in \Z$. Here $\mL(p)= p+$ (lower
power terms), and $\tilde{\mL}(p)= p$+ higher power terms. Hence
under certain analytic conditions, $\mL$ is a function univalent in
a neighborhood of $\infty$, and $\tilde{\mL}$ is a function
univalent in a neighborhood of the origin. In the work of
Wiegmann and Zabrodin, $\mL$ is the power series of $g$ and
$\tilde{\mL}$ is the power series of $1/\bar{g}(z^{-1})$.

In \cite{LT}, Takhtajan put the work of Wiegmann and Zabrodin in
the flavor of conformal field theory. He gave  proofs
to results in \cite{WZ, MWZ, KKMWZ} using the idea of sewing or conformal welding. He introduced the Schottky double $\Pa^1_{\mC}$ of
the Riemann surface $\Omega^-$-- the union of the Riemann surface
$\Omega^-$ and its complex conjugate copy $\ov{\Omega^-}$ gluing
together along their common boundary $\mC$. In this picture
$\ov{\Omega^-}$ is the interior of $\Pa^1_{\mC}$ and
$1/\bar{g}(z^{-1})$ is its Riemann mapping. Hence we can say that
the solution of Wiegmann and Zabrodin \cite{WZ} describes the sewing problem
of $\Pa^1_{\mC}$, where the interior and exterior domains are complex
conjugates of each other.

In this paper, we study the integrable structure of the conformal
welding problem we discussed  at the beginning of the introduction. Let $\Homeo_{C}(S^1)$
denotes the space of $C^1$ homeomorphisms of the unit
circle. Every $C^1$ homeomorphism
$\gamma$ has a unique factorization as $g^{-1}\circ
f\bigr\vert_{S^1}$, where $g$ is a univalent function on the
exterior of the unit disc $\Del^*$, and $f$ is a univalent function on the
unit disc $\Del$, subject to the normalization conditions $f(0)=0$,
$g(\infty) =\infty$ and $f'(0)g'(\infty)=1$.  $f$
and $g$ can be extended to be $C^1$ homeomorphisms of the plane.
Let $\gamma(e^{i\theta}) =\sum_{n\in \Z} c_n
e^{i(n+1)\theta}$ be the Fourier expansion of $\gamma$. We define $t_n$ by taking half a set of the Fourier
coefficients of $\gamma$ and half a set of the corresponding Fourier
coefficients of $1/\gamma^{-1}$. The other half sets of Fourier
coefficients of $\gamma$ and $1/\gamma^{-1}$ are defined as
$v_n$ --- differentiable functions of $t_n$ on $\Homeo_{C}(S^1)$. We
prove in Proposition \ref{Prop2} that the variation of $v_m$ with
respect to $t_n$ is equal to $-|mn| b_{n,m}$, when $m\neq 0, n\neq
0$, and $|m|b_{m,0}$ if $m\neq 0, n=0$, and $-2b_{0,0}$ when
$m=n=0$. Here $b_{n,m}$ are the generalized Grunsky coefficients
of the pair of functions $(f,g)$. A tau function
$\tau$ is constructed and   it is proved that the variation of
$\log \tau$ with respect to $t_n$ is equal to $v_n$. Combining
with Proposition \ref{Prop2}, we get
\begin{align*}
\frac{\pa^2 \F}{\pa t_n\pa t_m} &=-|nm| b_{n,m},
\hspace{0.5cm}n\neq 0, m\neq 0,\\
 \frac{\pa^2 \F}{\pa
t_n \pa t_0} &= |n| b_{n,0}, \hspace{0.5cm}n\neq 0,\hspace{2cm}
\frac{\pa^2\F}{\pa t_0^2} = -2b_{0,0},\nonumber
\end{align*}
where $\F=\log \tau$ is called the free energy. Using our result
in \cite{Teo03}, we conclude that the evolution of the
coefficients of the inverse mappings $g^{-1}$ and $f^{-1}$ with
respect to $t_n$ satisfies the dispersionless Toda hierarchy.
Namely, $(\mL=g^{-1}, \tilde{\mL}=f^{-1})$ is a solution of the
dispersionless Toda hierarchy.

For the pair of mappings $(g, f)$, we define
$t_n$ and $v_n$ as analogues of the $t_n$ and $v_n$ for the inverse mappings $(g^{-1}, f^{-1})$,  replacing $f$ by $f^{-1}$ and $g$ by $g^{-1}$. In this
case we prove that the variation of $v_m$ with respect to $t_n$ is
given by the generalized Grunsky coefficients of the pair
$(f^{-1}, g^{-1})$. We also construct a tau function $\tau$ such
that the variation of $\log \tau$ with respect to $t_n$ is equal
to $v_n$. Hence we also conclude   that $(\mL=g, \tilde{\mL}=f)$
satisfies the dispersionless Toda hierarchy.

In the formal level, our solutions to the dispersionless Toda hierarchy, as well as the solution of Wiegmann and Zabrodin,
can be viewed as the same solution but with $t_n$ interpreted differently. More precisely, although the $v_n$'s and $t_n$'s are
defined differently in each of these solutions, considering $v_m$
as functions of the formal variables $t_n$, they are actually defined
by the same functions. Heuristically, this can be seen from the
fact that the solutions $(\mL, \tilde{\mL})$ in these three
problems all satisfy the Riemann Hilbert data
\begin{align*}
\mL \mM^{-1} = \tilde{\mL},\hspace{3cm} \mM=\tilde{\mM}.
\end{align*}
Here $\mM$ and $\tilde{\mM}$ are the Orlov-Schulman functions. The
equivalent form of this Riemann Hilbert problem has been studied
by Takasaki  \cite{TT3} in connection to string theory. It
implies the following string equation
\begin{align*}
\{ \mL, \tilde{\mL}^{-1}\}=1,
\end{align*}
which characterize our choice of $t_n$ variables over   other choices.

Our solution to the mapping problem and the solution of
Wiegmann and Zabrodin can be viewed as two different phases of the
same problem if we consider a general    problem. Let $\mathfrak{D}$ be the space consists
of all pairs of univalent functions $(f,g)$ satisfying the  conditions $f(0)=0$,
$g(\infty)=\infty$, $f'(0)g'(\infty)=1$, $f$ and $g$ can be
extended as $C^1$   homeomorphisms of the plane.  We define the functions
$t_n$ and $v_n$, $n\in \Z$ as a
direct generalization of those defined for $(f, g)$ associated to a $\gamma\in \text{Homeo}_{C}(S^1)$.  $\{t_n, n\in
\Z\}$ can be considered as complex coordinates giving  the space  $\mathfrak{D}$ a complex structure such that the coefficients of $f$ and $g$ are all analytic functions of $t_n$. We show that  the variations of $(g, f)$ with respect
to the variables $t_n$, $n\in \Z$,
satisfy the dispersionless Toda hierarchy.  The
subspace of $\mathfrak{D}$ defined by the condition
$f(z)=1/\bar{g}(z^{-1})$ is the space considered in the solution
of Wiegmann and Zabrodin. The subspace of $\mathfrak{D}$ defined by
the condition $f(S^1)=g(S^1)$ corresponds to the space we consider
for the  conformal welding problem.

The content of the paper is the following. In Section 2 we
summarize some facts we need about conformal weldings and univalent functions. In Section 3 we quickly
review the dispersionless Toda hierarchy. In Section 4, we define an infinite set of variables $t_n$ and show that the evolution of the pair of inverse mappings $(g^{-1}, f^{-1})$ satisfies the dispersionless Toda hierarchy. In Section 5, an analogous result is proved for the pair of
mappings $(g, f)$. In Section 6, we discuss the Riemann Hilbert data of our
solutions and compare to the solution of Wiegmann and Zabrodin. In
Section7, we discuss a general problem that contains our solution and the solution of \cite{WZ} as special cases.
In the Appendix, we give an example --- the class of conformal
mappings to the discs, which corresponds to the subgroup of linear
fractional transformations on the unit circle.

\section{Conformal Weldings and Univalent functions}

We review some concepts we need about conformal weldings and univalent functions. For
details, see \cite{Ki, KY, Lehto, Gar, Pom, Duren, Teo03, Teo04}.

Let $\Homeo_{C}(S^1)$ be the space of all $C^1$
homeomorphisms on the unit circle $S^1$. We denote by $\mathbb{D}$
the unit disc and $\mathbb{D}^*$ the exterior of the unit disc. A
$C^1$ homeomorphism $\gamma \in \Homeo_{C}(S^1)$ has the
following properties:

 I. It can be extended to be a $C^1$ map on the
complex plane, which is also denoted by $\gamma$, and satisfies
\begin{align*}
\gamma\left(\frac{1}{\z}\right)=\frac{1}{\ov{\gamma(z)}},
\hspace{2cm}\forall z\in \C.
\end{align*}
Moreover, $\gamma$ is real analytic on $\C \setminus S^1$.

II. There exists two $C^1$  homeomorphisms $f$ and $g$ of the plane, such
that $\gamma= g^{-1}\circ f$, and $f\bigr\vert_{\mathbb{D}}$ and
$g\bigr\vert_{\mathbb{D}^*}$ are the unique univalent functions
satisfying $f(0)=0$, $g(\infty)= \infty$ and
$f'(0)g'(\infty)=1$. The decomposition of $\gamma$ as
$g^{-1}\circ f$ is known as conformal welding or sewing. We can
associate to $\gamma$ the simply connected domain $\Omega^+=f(\mathbb{D}) =
g(\mathbb{D})$, its exterior $\Omega^-$ and their common boundary
$\mathcal{C} = f(S^1)=g(S^1)$, a $C^1$ curve. Such an association
is not one-to-one.

Let $\mathfrak{F}(z) = a_1z +a_2 z^2 + \ldots$ be a function
univalent in a neighborhood of the origin and $\mathfrak{G}(z) =
bz + b_0 + b_1 z^{-1} + \ldots$, $b=a_1^{-1}$ be a function
univalent in a neighborhood of $\infty$. We define the
generalized Grunsky coefficients $b_{m,n}$, $m,n\in\Z$ and Faber
polynomials $P_n$ and $Q_n$ by the following formal power series
expansion:
\begin{align*}
\log \frac{\mathfrak{G}(z)-\mathfrak{G}(\zeta)}{z-\zeta}&=
\log b-\sum_{m=1}^{\infty}\sum_{n=1}^{\infty} b_{mn} z^{-m}\zeta^{-n},\end{align*}\begin{align*}
\log \frac{\mathfrak{G}(z) - \mathfrak{F}(\zeta)}{z} &= \log
b-\sum_{m=1}^{\infty}
\sum_{n=0}^{\infty} b_{m,-n} z^{-m} \zeta^n,\end{align*}\begin{align*}
\log \frac{\mathfrak{F}(z)-\mathfrak{F}(\zeta)}{z-\zeta} &=
-\sum_{m=0}^{\infty}
\sum_{n=0}^{\infty} b_{-m, -n} z^{m} \zeta^n,\end{align*}\begin{align*}
\log \frac{\mathfrak{G}(z) -w}{bz} &= -\sum_{n=1}^{\infty}
\frac{P_n(w)}{n}
z^{-n} ,\end{align*}\begin{align*}
\log \frac{w- \mathfrak{F}(z)}{w} &= \log
\frac{\mathfrak{F}(z)}{a_1 z}-\sum_{n=1}^{\infty} \frac{Q_n(w)}{n}
z^{n};
\end{align*}
and for $m\geq 0$, $n\geq 1$, $b_{-m,n}=b_{n,m}$. By definition,
the Grunsky coefficients are symmetric, i.e., $b_{m,n} =b_{n,m}$
for all $m,n\in\Z$. $P_n(w)$ is a polynomial of degree $n$ in $w$
and $Q_n(w)$ is a polynomial of degree $n$ in $1/w$. More
precisely,
\[
P_n(w) = (\mathfrak{G}^{-1}(w))^n_{\geq 0}, \hspace{1cm}Q_n(w) =
(\mathfrak{F}^{-1}(w))^{-n}_{\leq 0}.\] Here when $S$ is a subset
of integers and $A(w)=\sum_{n} A_nw^n$ is a (formal) power series,
we denote by $(A(w))_{S} = \sum_{n\in S} A_n w^n$.

The functions $\log (\mathfrak{G}(z)/z)$, $P \circ \mathfrak{G}$
and $Q\circ \mathfrak{G}$ are meromorphic in a neighborhood of
$\infty$ and the functions $\log (\mathfrak{F}(z)/z)$, $P_n\circ
\mathfrak{F}$ and $Q_n \circ \mathfrak{F}$ are meromorphic in a
neighborhood of the origin. Their power series expansions are
given by
\begin{align}\label{iden1}
\log\frac{\mathfrak{G}(z)}{z} &= \log b-\sum_{m=1}^{\infty}
b_{m,0} z^{-m},\hspace{1.5cm} \log\frac{\mathfrak{F}(z)}{z} =\log
a_1-\sum_{m=1}^{\infty}
b_{-m, 0} z^{m}\\
P_n (\mathfrak{G}(\zeta)) &= \zeta^n + n\sum_{m=1}^{\infty} b_{nm}
\zeta^{-m},\hspace{1.5cm}P_n (\mathfrak{F}(\zeta)) =
nb_{n,0}+n\sum_{m=1}^{\infty}
b_{n, -m} \zeta^m, \nonumber\\
Q_n(\mathfrak{G}(\zeta)) &= -nb_{-n,0} + n\sum_{m=1}^{\infty}
b_{m,-n} \zeta^{-m},\hspace{0.4cm}Q_n (\mathfrak{F}(\zeta))=
\zeta^{-n} + n\sum_{m=1}^{\infty} b_{-n,-m} \zeta^{m}.\nonumber
\end{align}

If $(\mathfrak{F},\mathfrak{G})$ are univalent on $\mathbb{D}$ and
$\mathbb{D}^*$ respectively, we also have the following expansions
that converge on $\mathbb{D}$ and $\mathbb{D}^*$ respectively:
\begin{align}\label{iden4}
\frac{1}{\mathfrak{F}(z)-w}&=-\frac{1}{w}+\sum_{n=1}^{\infty}
\frac{Q_n'(w)}{n}z^n, \hspace{2cm} z\in \mathbb{D}, w\in \Omega^-\\
\frac{\mathfrak{F}'(\zeta)}{\mathfrak{F}(\zeta)-\mathfrak{F}(z)}&=\frac{1}{\zeta-z}-\sum_{m=0}^{\infty}
\sum_{n=1}^{\infty} nb_{-m,-n} z^m\zeta^{n-1}
,\hspace{1cm} z,\zeta \in \mathbb{D},\nonumber\\
\frac{1}{\mathfrak{G}(z)-w} &=\sum_{n=1}^{\infty}
\frac{P_n'(w)}{n} z^{-n}, \hspace{3cm}z\in \mathbb{D}^*, w\in
\Omega^+,\nonumber\\
\frac{\mathfrak{G}'(\zeta)}{\mathfrak{G}(\zeta)-\mathfrak{G}(z)}
&=\frac{1}{\zeta-z} +\sum_{n=1}^{\infty}\sum_{m=1}^{\infty}
mb_{n,m}z^{-n}\zeta^{-m-1},\hspace{1cm} z,\zeta \in
\mathbb{D}^*\nonumber.
\end{align}
This follows immediately from the definition. On the other hand,
when $\mathfrak{F}$ and $\mathfrak{G}$ are complimentary univalent
functions on $\mathbb{D}$ and $\mathbb{D}^*$, i.e. the exterior of
the domain $\mathfrak{F}(\mathbb{D})$ is the domain
$\mathfrak{G}(\mathbb{D}^*)$, we proved in \cite{LT2} that the
semi-infinite matrix $\mathfrak{B}$ and $\mathfrak{C}$ defined by
$$\mathfrak{B}_{mn}=\sqrt{nm}b_{-m,n},\hspace{1cm}\mathfrak{C}_{mn}
=\sqrt{mn}b_{m,-n},\hspace{1cm}m,n\geq 1$$ are invertible
matrices. It follows immediately that
\begin{lemma}\label{ess}Let $\mathfrak{F}$ and $\mathfrak{G}$ be complimentary
univalent functions on $\mathbb{D}$ and $\mathbb{D}^*$ and let
$P_n$, $Q_n$ be their Faber polynomials. \\
I. If there exist constants $\alpha_n$, $n\geq 1$ such that in a
neighborhood of the origin,
\begin{align*}
\sum_{n=1}^{\infty} \alpha_n P_n'(z)=0,
\end{align*}
then $\alpha_n=0$ for all $n$.\\
II. If there exist constants $\beta_n$, $n\geq 1$ such that in a
neighborhood of $\infty$,
\begin{align*}
\sum_{n=1}^{\infty} \beta_n Q_n'(z)=0,
\end{align*}
then $\beta_n=0$ for all $n$.

\end{lemma}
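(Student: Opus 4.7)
The plan is to transport the hypothesis from the target $z$-planes into the disc/exterior disc using $\mathfrak{F}$ (respectively $\mathfrak{G}$), then read off linear equations on the coefficients from the expansions in \eqref{iden1}, and finally invoke invertibility of the Grunsky matrix $\mathfrak{C}$ established in \cite{LT2}.

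For Part I, I would substitute $z=\mathfrak{F}(\zeta)$ with $\zeta$ in a small neighborhood of $0$. Since $\mathfrak{F}(0)=0$, this lands $z$ inside the neighborhood of the origin where the hypothesis holds, so $\sum_{n}\alpha_n P_n'(\mathfrak{F}(\zeta))=0$. Differentiating the identity $P_n(\mathfrak{F}(\zeta))=nb_{n,0}+n\sum_{m=1}^{\infty}b_{n,-m}\zeta^m$ from \eqref{iden1} gives
\begin{equation*}
P_n'(\mathfrak{F}(\zeta))\,\mathfrak{F}'(\zeta)=n\sum_{m=1}^{\infty} m\,b_{n,-m}\zeta^{m-1},
\end{equation*}
and since $\mathfrak{F}'(\zeta)\neq 0$ by univalence, multiplying through by $\mathfrak{F}'(\zeta)$ and equating the coefficient of $\zeta^{m-1}$ to zero yields, for every $m\geq 1$,
\begin{equation*}
\sum_{n=1}^{\infty} n\,b_{n,-m}\,\alpha_n=0,\qquad\text{i.e.}\qquad \sum_{n=1}^{\infty}\mathfrak{C}_{nm}\,(\sqrt{n}\,\alpha_n)=0.
\end{equation*}
Thus the vector $(\sqrt{n}\,\alpha_n)_{n\geq 1}$ lies in the kernel of $\mathfrak{C}^{T}$, and invertibility of $\mathfrak{C}$ forces $\alpha_n=0$ for all $n$.

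Part II is handled symmetrically by composing with $\mathfrak{G}$ on a neighborhood of $\infty$. From \eqref{iden1} we have $Q_n(\mathfrak{G}(\zeta))=-nb_{-n,0}+n\sum_{m=1}^{\infty}b_{m,-n}\zeta^{-m}$, which upon differentiation gives $Q_n'(\mathfrak{G}(\zeta))\,\mathfrak{G}'(\zeta)=-n\sum_{m\geq 1}m\,b_{m,-n}\zeta^{-m-1}$. Plugging $z=\mathfrak{G}(\zeta)$ into the hypothesis and extracting the coefficient of $\zeta^{-m-1}$ yields $\sum_{n}n\,b_{m,-n}\,\beta_n=0$ for every $m\geq 1$, i.e.\ $\sum_n\mathfrak{C}_{mn}(\sqrt{n}\,\beta_n)=0$, and invertibility of $\mathfrak{C}$ again gives $\beta_n=0$ for all $n$.

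No serious obstacle is expected: the identities needed are already recorded in \eqref{iden1}, and the manipulation is legitimate because the hypothesis holds on an open set (so the pulled-back series vanishes identically as a formal power series, justifying term-by-term extraction of coefficients). The only content is the reduction to a linear system whose matrix is, up to the diagonal rescaling $\sqrt{nm}$, precisely the Grunsky matrix $\mathfrak{C}$ shown to be invertible in \cite{LT2}; this is exactly what makes the lemma an immediate corollary of that invertibility.
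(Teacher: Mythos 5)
Your proposal is correct and follows essentially the same route as the paper: compose the vanishing sum with $\mathfrak{F}$ (resp. $\mathfrak{G}$), multiply by $\mathfrak{F}'$ (resp. $\mathfrak{G}'$), read off the coefficients from the expansions \eqref{iden1}, and conclude from the invertibility of the matrices $\mathfrak{B}$, $\mathfrak{C}$ of \cite{LT2} (note $\mathfrak{C}=\mathfrak{B}^{T}$ by symmetry of the Grunsky coefficients, so your system $\mathfrak{C}^{T}\boldsymbol{\alpha}=0$ is literally the paper's $\mathfrak{B}\boldsymbol{\alpha}=0$). No substantive difference.
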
\begin{proof}In a neighborhood of the origin, we have
\begin{align*}
\sum_{n=1}^{\infty} \alpha_n
P_n'(\mathfrak{F}(z))\mathfrak{F'}(z)=\sum_{n=1}^{\infty}
\sum_{m=1}^{\infty} nm\alpha_n b_{n,-m}
z^{m-1}=\sum_{m=1}^{\infty}\sqrt{m}(\mathfrak{B}\boldsymbol{\alpha})_mz^{m-1},
\end{align*}
where $\boldsymbol{\alpha}=(\alpha_1, \sqrt{2}\alpha_2,\ldots,
\sqrt{n}\alpha_n,\ldots)^T$. If $\sum_{n=1}^{\infty} \alpha_n P_n'(z)=0$, then by uniqueness of power series
expansion, we have $\mathfrak{B}\boldsymbol{\alpha}=0$. By
invertibility of $\mathfrak{B}$, we conclude that
$\boldsymbol{\alpha}=0$. This prove statement I. Statement II is
proved similarly.
\end{proof}

\section{Dispersionless Toda hierarchy}
The dispersionless Toda hierarchy is a hierarchy of equations
describing the evolution of the coefficients of a pair of formal
power series $(\mL, \tilde{\mL})$:
\begin{align}\label{series}
\mL(p) = r(\boldsymbol{t})p + \sum_{n=0}^{\infty} u_{n+1}(\boldsymbol{t}) p^{-n},\\
(\tilde{\mL}(p))^{-1} = r(\boldsymbol{t})p^{-1} +
\sum_{n=0}^{\infty} \tilde{u}_{n+1}(\boldsymbol{t})
p^{n}\nonumber.
\end{align}
Here $r(\boldsymbol{t})$, $u_n(\boldsymbol{t})$ are functions of
$t_n$, $n\in \Z$, which we denote collectively by
$\boldsymbol{t}$; $p$ is a formal variable independent of
$\boldsymbol{t}$. The evolution of the coefficients $u_n$ are
encoded in the following Lax equations:
\begin{align}\label{Lax}
\frac{\pa \mL}{\pa t_n} =\{ \B_n, \mL\}_T, \hspace{2cm}
\frac{\pa \mL}{\pa t_{-n}} = \{\tilde{B}_n, \mL\}_T,\\
\frac{\pa \tilde{\mL}}{\pa t_n} =\{ \B_n, \tilde{\mL}\}_T,
\hspace{2cm} \frac{\pa \tilde{\mL}}{\pa t_{-n}} = \{\tilde{B}_n,
\tilde{\mL}\}_T.\nonumber
\end{align}
Here $\{\cdot, \cdot\}_T$ is the Poisson bracket
\begin{align*}
\{ f, g\}_T = p \frac{\pa f}{\pa p} \frac{\pa g}{\pa
t_0}-p\frac{\pa f}{\pa t_0} \frac{\pa g}{\pa p}
\end{align*}
and \begin{align*}
\B_n=(\mL^n)_{>0}+\frac{1}{2}(\mL^n)_0,\hspace{1cm}\tilde{B}_n=(\tilde{\mL}^n)_{<0}
+\frac{1}{2}(\tilde{\mL}^n)_{0}.\end{align*} Proposition 3.1 in
\cite{Teo03} can be reformulated as
\begin{proposition}\label{Hirota} If there exists a function $\F$ of
$\boldsymbol{t}$ such that it generates the generalized Grunsky
coefficients of a pair $(\mathfrak{F}, \mathfrak{G})$ of formal
power series, namely
\begin{align*}
\frac{\pa^2\F(\boldsymbol{t})}{\pa t_m\pa t_n}
=\begin{cases}-|mn|b_{m,n}(\boldsymbol{t}),\hspace{1cm}&\text{if}\;\;m\neq 0, n\neq 0\\
|m| b_{m,0}(\boldsymbol{t}), &\text{if}\;\;m\neq 0,n= 0\\
-2b_{0,0}(\boldsymbol{t}),&\text{if}\;\;m=n=0,
\end{cases}
\end{align*}then the pair of
formal power series $( \mathfrak{G}^{-1}, \mathfrak{F}^{-1})$
satisfies the dispersionless Toda hierarchy. Here $
\mathfrak{G}^{-1}$ and  $\mathfrak{F}^{-1}$ are the inverse
functions of $ \mathfrak{G}$ and $\mathfrak{F}$
respectively.
\end{proposition}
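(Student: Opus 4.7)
The plan is to verify the Lax equations (3.3) for $\mL=\mathfrak{G}^{-1}$ and $\tilde{\mL}=\mathfrak{F}^{-1}$ directly, using the generating-function identities collected in equations (2.1)–(2.2). Since the proposition is a reformulation of Proposition 3.1 of \cite{Teo03}, the proof is essentially a bookkeeping translation: the Grunsky-type data furnished by $\F$ are exactly what governs $\mL, \tilde{\mL}$ through their Faber polynomials.

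First I would identify the generators $\B_n, \tilde{B}_n$ explicitly. From the definition $P_n(w)=(\mathfrak{G}^{-1}(w))^n_{\geq 0}$ and $Q_n(w)=(\mathfrak{F}^{-1}(w))^{-n}_{\leq 0}$ one immediately reads off $(\mL^n)_{\geq 0}=P_n(p)$ and $(\tilde{\mL}^{-n})_{\leq 0}=Q_n(p)$, so $\B_n$ and $\tilde{B}_n$ are polynomials in $p^{\pm 1}$ whose coefficients involve only the diagonal Grunsky data $b_{n,0}$ and $b_{-n,0}$ (plus the ``radius'' $\log b=-b_{0,0}$ coming from $\log(\mathfrak{G}(z)/z)$ in (2.1)).

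Next I would differentiate each of the four defining generating functions of the Grunsky coefficients with respect to $t_k$, apply the hypothesis on $\partial_{t_m}\partial_{t_n}\F$, and recognize the resulting double sums in $b_{k,\pm m}$ as evaluations of $P_k$ and $Q_k$ on $\mathfrak{G}$ and $\mathfrak{F}$ (by (2.1) again). This produces closed formulas for $\partial_{t_k}\mathfrak{G}(z)$ and $\partial_{t_k}\mathfrak{F}(z)$ in terms of Faber polynomials. Converting to derivatives of the inverses via $\partial_{t_k}\mL=-\mL_p\cdot(\partial_{t_k}\mathfrak{G})\circ\mL$ (and likewise for $\tilde{\mL}$) gives candidate expressions that I would then match against $\{\B_k,\mL\}_T$ and $\{\tilde{B}_k,\mL\}_T$ by a direct computation with the Poisson bracket; the symmetry $b_{m,n}=b_{n,m}$ is what guarantees the two sides agree, and Lemma \ref{ess} (uniqueness for Faber-series expansions) lets me pin down equality coefficient-wise once the generating relations match.

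The main obstacle is the bookkeeping at $k=0$: the logarithmic identities $\log(\mathfrak{G}(z)/z)=\log b-\sum b_{m,0}z^{-m}$ and $\log(\mathfrak{F}(z)/z)=\log a_1-\sum b_{-m,0}z^m$ from (2.1) must be differentiated in $t_0$, and one must check that the factor $-2b_{0,0}$ in the hypothesis (rather than $-b_{0,0}$) is exactly what is needed so that $\partial_{t_0}\log r$ is consistent when viewed from both $\mathfrak{G}$ and $\mathfrak{F}$, i.e. so that $\partial_{t_0}\mL$ and $\partial_{t_0}\tilde{\mL}$ agree with $\{\B_0,\mL\}_T$ and $\{\B_0,\tilde{\mL}\}_T$ simultaneously. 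Once this zeroth-time compatibility is handled, the $k\neq 0$ equations follow uniformly from the Faber/Grunsky identities above.
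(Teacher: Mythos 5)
The paper does not actually prove this proposition: it is quoted as a reformulation of Proposition 3.1 of \cite{Teo03}, so there is no internal argument to compare against. Your overall strategy --- identifying $\B_n$ and $\tilde B_n$ with the Faber polynomials $P_n(p)$ and $Q_n(p)$ up to the constant corrections $\tfrac{n}{2}b_{\pm n,0}$, differentiating the Grunsky generating functions in $t_k$, and passing to the inverse functions --- is the standard route and is the one taken in \cite{Teo03}. But as written the proposal has a gap exactly at the step that carries all the content. You say that differentiating the generating functions ``produces closed formulas for $\partial_{t_k}\mathfrak{G}(z)$ in terms of Faber polynomials'' and that the symmetry $b_{m,n}=b_{n,m}$ is what makes the two sides agree. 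Neither is quite right. The hypothesis enters through the \emph{total symmetry of the third derivatives} of $\F$: for instance, from $b_{m,0}=\tfrac1m\partial_{t_m}\partial_{t_0}\F$ and $b_{k,m}=-\tfrac{1}{km}\partial_{t_k}\partial_{t_m}\F$ one gets $\partial_{t_k}b_{m,0}=-k\,\partial_{t_0}b_{k,m}$, and summing against $z^{-m}$ turns $\partial_{t_k}\log\bigl(\mathfrak{G}(z)/z\bigr)$ into $\partial_{t_0}\bigl(P_k(\mathfrak{G}(z))-\tfrac{k}{2}b_{k,0}\bigr)=\partial_{t_0}\bigl(\B_k\circ\mathfrak{G}\bigr)(z)$. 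So what one obtains is not a closed expression for $\partial_{t_k}\mathfrak{G}$ but a zero-curvature-type identity in which a residual $\partial_{t_0}$ remains; that $\partial_{t_0}$ is precisely what becomes the Poisson bracket $\{\cdot,\cdot\}_T$ after substituting $z=\mL(p)$. Without exhibiting this exchange of derivative indices, the plan does not actually connect the hypothesis on $\F$ to the Lax equations, and no amount of ``matching against $\{\B_k,\mL\}_T$'' will close the loop.

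Two smaller points. Lemma \ref{ess} is not available here: the proposition concerns formal power series, whereas that lemma (and the invertibility of the matrices $\mathfrak{B}$, $\mathfrak{C}$ on which it rests) requires $\mathfrak{F}$ and $\mathfrak{G}$ to be complementary univalent functions. It is also unnecessary --- in the formal setting, uniqueness of power-series coefficients is all that is needed. Finally, two bookkeeping slips: $b_{0,0}=\log b=-\log a_1$ (not $-\log b$), and the coefficients of $P_n(p)$, hence of $\B_n$, depend on all the coefficients of $\mathfrak{G}$, not only on the diagonal data $b_{n,0}$; only the constant-term correction $(\mL^n)_0=nb_{n,0}$ is ``diagonal.''
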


\section{The inverse mappings in conformal welding problem and dispersionless Toda
hierarchy}\label{inv} Let $\gamma=g^{-1}\circ f$ be the conformal
welding associated to a point $\gamma\in \Homeo_{C}(S^1)$ and let
$F$ and $G$ be the inverse functions of $f$ and $g$ respectively.
In this section, we are going to define a set of variables $t_n$,
$n\in\Z$ on the space $\Homeo_C(S^1)$. We prove that the evolution
of the mappings $(G, F)$ with respect to these variables satisfies
the dispersionless Toda hierarchy.

Given $\gamma= g^{-1}\circ f \in \Homeo_C(S^1)$, we denote by
$\Omega^+$ the domain $f(\mathbb{D})=g(\mathbb{D})$, $\Omega^-$
its exterior and $\mathcal{C}=f(S^1)=g(S^1)$ the corresponding
$C^1$ -- curve. Let the power series
expansion of $\left.f\right\vert_{\mathbb{D}}$ and
$\left.g\right\vert_{\mathbb{D}^*}$ be given by
\begin{align*}
f(z) &= \sum_{n=1}^{\infty} a_n z^n = a_1 z+ a_2 z^2+\ldots,\\
g(z) &= bz+\sum_{n=0}^{\infty} b_n z^{-n} = bz+ b_0+b_1 z^{-1}
+\ldots.
\end{align*}
They converge on $\mathbb{D}\cup S^1$ and $\mathbb{D}^*\cup S^1$
respectively.

\subsection{The variables $t_n$ and $v_n$}
Since $\gamma$ is $C^1$ on the unit circle, it has Fourier series
expansion on $S^1$ which converges absolutely. We introduce $t_n,
v_n$, $n< 0$, and $t_0$ as coefficients of its Fourier series
expansion:
\begin{align}\label{eq8_4_1}
\gamma(w) = \sum_{n=1}^{\infty} -nt_{-n} w^{-n+1} + t_0 w
+\sum_{n=1}^{\infty} -v_{-n} w^{n+1} , \hspace{1cm} w=e^{i\theta}.
\end{align}
Next, we consider the  mapping $1/\gamma^{-1}=
(1/f^{-1})\circ g= (1/F)\circ g$. We introduce $t_n , v_n$, $n>0$
as coefficients of its Fourier series expansion on $S^1$:
\begin{align}\label{eq8_4_2}
\frac{1}{\gamma^{-1}(w)} =\sum_{n=1}^{\infty}nt_nw^{n-1} + c_0
w^{-1} + \sum_{n=1}^{\infty} v_n w^{-n-1},\hspace{1cm}
w=e^{i\theta}.
\end{align}
For the coefficient $c_0$, we have
\begin{align*}
c_0 =\frac{1}{2\pi i} \oint_{S^1} \frac{1}{\gamma^{-1}(w)} dw=
\frac{1}{2\pi i} \oint_{S^1} \frac{1}{w}d\gamma(w)=\frac{1}{2\pi
i}\oint_{S^1}\frac{\gamma(w)}{w^2}dw =t_0.
\end{align*}
Finally, we define the function $v_0$ on $\Homeo_C(S^1)$ as
\begin{align*}
v_0 =\frac{1}{2\pi i} \oint_{S^1}\left(\left( \log
\frac{f(w)}{w}\right) \frac{\gamma(w)}{w^2}
-\left(\log\frac{g(w)}{w}\right) \frac{1}{\gamma^{-1}(w)}\right)
dw-\frac{1}{2\pi i} \oint_{\mathcal{C}} \frac{G(z)}{F(z)}
\frac{dz}{z}.
\end{align*}
\begin{remark}
Heuristically, we have
\begin{align*}
v_0 =-\frac{1}{2\pi i} \oint_{S^1}\left((\log w)
\frac{\gamma(w)}{w^2} - \frac{\log w}{\gamma^{-1}(w)}\right) dw
\end{align*}
since heuristically
\begin{align*}
\frac{1}{2\pi i} \oint_{S^1}\left(\left( \log f(w)\right)
\frac{\gamma(w)}{w^2} -\left(\log g(w)\right)
\frac{1}{\gamma^{-1}(w)}\right) dw-\frac{1}{2\pi i}
\oint_{\mathcal{C}} \frac{G(z)}{F(z)} \frac{dz}{z}\\
=-\oint_{\mathcal{C}} d\left(\frac{G(z)}{F(z)} \log z\right)=0,
\end{align*}
if we ignore the multi-valued-ness of $\log$ function.
\end{remark}

To see that $t_n, n\in\Z$ give a complete set of local coordinates for $\Homeo_{C}(S^1)$, we need the following lemma.

\begin{lemma}\label{lemma1}
Let $\gamma_t=g_t^{-1}\circ f_t$ be any deformation of  $C^1$
mappings, then
\begin{align*}
\frac{(\pa  \gamma_t/\pa t)\circ F_t(z)}{F_t(z)^2}F_t'(z) =
\left(\frac{\pa}{\pa t} \frac{1}{\gamma_t^{-1}}\right)\circ G_t(z)
G_t'(z), \hspace{1cm} z\in \mathcal{C}.
\end{align*}
\end{lemma}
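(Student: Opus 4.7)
The plan is to exploit the welding factorization together with implicit differentiation of $\gamma_t\circ\gamma_t^{-1}=\mathrm{id}$. The only consequence of $\gamma_t=g_t^{-1}\circ f_t$ that I need is the identity
\begin{align*}
\gamma_t\circ F_t=G_t\qquad\text{on}\;\;\mathcal{C}_t,
\end{align*}
obtained by applying both sides to $w\in S^1$ and using $F_t=f_t^{-1}$, $G_t=g_t^{-1}$. Equivalently, for $z\in\mathcal{C}_t$ the points $w:=F_t(z)$ and $\tilde w:=G_t(z)$ in $S^1$ satisfy $\tilde w=\gamma_t(w)$.

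Next I would compute the factor $\partial_t(1/\gamma_t^{-1})(\tilde w)$ at $\tilde w\in S^1$ held fixed. Differentiating $\gamma_t(\gamma_t^{-1}(\tilde w))=\tilde w$ in $t$ at fixed $\tilde w$ gives
\begin{align*}
(\partial_t\gamma_t)(w)+\gamma_t'(w)\,\partial_t\gamma_t^{-1}(\tilde w)=0,\qquad w=\gamma_t^{-1}(\tilde w),
\end{align*}
so $\partial_t\gamma_t^{-1}(\tilde w)=-(\partial_t\gamma_t)(w)/\gamma_t'(w)$ and hence
\begin{align*}
\frac{\partial}{\partial t}\frac{1}{\gamma_t^{-1}(\tilde w)}=-\frac{\partial_t\gamma_t^{-1}(\tilde w)}{\gamma_t^{-1}(\tilde w)^{2}}=\frac{(\partial_t\gamma_t)(w)}{w^{2}\,\gamma_t'(w)}.
\end{align*}

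Substituting $\tilde w=G_t(z)$ (so that $w=F_t(z)$ by the welding identity) and multiplying by $G_t'(z)$ brings the right-hand side of the lemma to the form
\begin{align*}
\frac{(\partial_t\gamma_t)(F_t(z))\,G_t'(z)}{F_t(z)^{2}\,\gamma_t'(F_t(z))}.
\end{align*}
Comparison with the left-hand side reduces the claim to the single identity $\gamma_t'(F_t(z))\,F_t'(z)=G_t'(z)$. This is the chain rule applied to $\gamma_t=G_t\circ f_t$ at $w=F_t(z)$, namely $\gamma_t'(w)=G_t'(f_t(w))f_t'(w)=G_t'(z)/F_t'(z)$, using $f_t'(w)=1/F_t'(z)$.

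There is no real obstacle; the proof is bookkeeping around the chain rule. The one point needing care is that $\partial/\partial t$ on each side denotes pointwise differentiation at a \emph{fixed} point of $S^1$ before substituting $F_t(z)$ or $G_t(z)$, so that no extraneous $\partial_t F_t$ or $\partial_t G_t$ terms appear. The welding identity $\gamma_t\circ F_t=G_t$ is precisely the bridge converting differentiation at fixed $w\in S^1$ to differentiation at fixed $\tilde w\in S^1$.
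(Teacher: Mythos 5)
Your proof is correct and follows essentially the same route as the paper: both differentiate $\gamma_t\circ\gamma_t^{-1}=\mathrm{id}$ in $t$ and in $w$ to express $\partial_t\gamma_t^{-1}$ in terms of $\partial_t\gamma_t$ and $\gamma_t'$, then substitute $w=G_t(z)$ and use the welding relation ($\gamma_t^{-1}=F_t\circ g_t$ in the paper, equivalently your $\gamma_t\circ F_t=G_t$) together with the chain rule to convert $\gamma_t'(F_t(z))$ into $G_t'(z)/F_t'(z)$. Your write-up is just a more explicit unpacking of the paper's two displayed identities.
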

\begin{proof}
Differentiating $\gamma_t\circ \gamma_t^{-1}= \text{id}$ with respect to $t$ and $w$, we have by chain rule:
\begin{align*}
\frac{\pa \gamma_t}{\pa t}\circ \gamma_t^{-1}(w) + \frac{\pa
\gamma_t}{\pa w}\circ \gamma_t^{-1}(w) \frac{\pa}{\pa
t}\left(\gamma_t^{-1}\right)(w)=0,\\
\frac{\pa \gamma_t}{\pa w}\circ \gamma_t^{-1}(w) \frac{\pa \gamma_t^{-1}(w)}{\pa w}=1,  \hspace{1cm} w\in S^1.
\end{align*}
The assertion follows by setting $w=G(z)$ and using $\gamma_t^{-1}=F_t\circ g_t$.
\end{proof}

\begin{proposition}\label{Prop1}
If $\gamma_t$, $t\in (-\vep, \vep) \subseteq\R$ is a curve on
$\Homeo_C(S^1)$ such that $dt_n/dt=0$ for all $n$, then
$\gamma_t=\gamma_0$ for all $t$.
\end{proposition}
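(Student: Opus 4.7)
The plan is to convert the hypothesis into a Fourier vanishing statement, then use Lemma \ref{lemma1} to splice two holomorphic functions into a continuous function on the Riemann sphere that is forced to be identically zero. Under $dt_n/dt\equiv 0$, the only Fourier coefficients of $\gamma_t$ that can still depend on $t$ are, by \eqref{eq8_4_1}, the $v_{-n}$ for $n\ge 1$, which are precisely the coefficients of the modes $w^k$ with $k\ge 2$; and the only coefficients of $1/\gamma_t^{-1}$ that can depend on $t$ are, by \eqref{eq8_4_2}, the $v_n$ for $n\ge 1$, namely the coefficients of $w^k$ with $k\le -2$. Hence
\[
\frac{\pa\gamma_t}{\pa t}(w) = \sum_{k\ge 2}\alpha_k\, w^k,\qquad \frac{\pa}{\pa t}\frac{1}{\gamma_t^{-1}}(w)=\sum_{k\le -2}\beta_k\, w^k\qquad (w\in S^1),
\]
so $\pa\gamma_t/\pa t$ extends holomorphically to $\Del$ with a zero of order $\ge 2$ at $0$, and $(\pa/\pa t)(1/\gamma_t^{-1})$ extends holomorphically to $\Del^*$ with a zero of order $\ge 2$ at $\infty$.

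Next I would pull these extensions back via the biholomorphisms $F_t\colon\Omega^+\to\Del$ and $G_t\colon\Omega^-\to\Del^*$, using $F_t(0)=0$ and $G_t(\infty)=\infty$. Setting
\[
H_+(z)\eqdef \frac{(\pa\gamma_t/\pa t)\circ F_t(z)}{F_t(z)^2}F_t'(z),\qquad H_-(z)\eqdef \left(\frac{\pa}{\pa t}\frac{1}{\gamma_t^{-1}}\right)\!\circ G_t(z)\,G_t'(z),
\]
the double zero of $F_t(z)^2$ at $z=0$ is cancelled by the order $\ge 2$ zero of the numerator, so $H_+$ is holomorphic on all of $\Omega^+$ (including $z=0$); analogously $H_-$ is holomorphic on $\Omega^-$ and vanishes at $\infty$, since the composed function is $O(1/G_t(z)^2)$ there while $G_t'$ stays bounded. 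Lemma \ref{lemma1} asserts $H_+=H_-$ on $\mathcal{C}$, so $H_\pm$ glue to a continuous function $H$ on $\Pa^1$ that is holomorphic off the $C^1$ Jordan curve $\mathcal{C}$.

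A standard Morera/removability argument across the rectifiable arc $\mathcal{C}$ then upgrades $H$ to a holomorphic function on all of $\Pa^1$; since it vanishes at $\infty$, $H\equiv 0$. As $F_t$ and $F_t'$ are non-vanishing on $\mathcal{C}$ and $F_t$ sends $\mathcal{C}$ bijectively onto $S^1$, this forces $\pa\gamma_t/\pa t\equiv 0$ on $S^1$, i.e.\ $\gamma_t=\gamma_0$. The main obstacle I anticipate is not the Fourier-mode bookkeeping but the boundary regularity needed to make the holomorphic extensions continuous up to $\mathcal{C}$ and the removability of $\mathcal{C}$ rigorous; the $C^1$ hypotheses on $\gamma_t$ on $S^1$ and on the family $t\mapsto\gamma_t$ are precisely what supply this, and verifying that the extensions $H_\pm$ are continuous (not merely bounded) up to $\mathcal{C}$ is the point that deserves the most care.
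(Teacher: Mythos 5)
Your proposal is correct and follows essentially the same route as the paper: translate $dt_n/dt=0$ into the vanishing of the low-order Fourier modes, pull back by $F_t$ and $G_t$ to get holomorphic functions on $\Omega^+$ and $\Omega^-$, glue them along $\mathcal{C}$ via Lemma \ref{lemma1} into a function holomorphic on $\hat{\C}$ vanishing at $\infty$, and conclude it is zero. Your explicit attention to the Morera/removability step across the $C^1$ curve is a point the paper leaves implicit, but it is the same argument.
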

\begin{proof}

If $dt_n/dt\bigr\vert_{t=0}=0$ for all $n$, then \eqref{eq8_4_1} and \eqref{eq8_4_2} give
\begin{align*}
\frac{d \gamma_t}{d t}(w) = -\sum_{n=1}^{\infty} \frac{d v_{-n}}{d
t} w^{n+1}, \hspace{0.5cm}\left(\frac{d }{d t}
\left(\frac{1}{\gamma^{-1}_t}\right)\right)(w) =
\sum_{n=1}^{\infty} \frac{d v_n}{d t} w^{-n-1}\hspace{0.8cm} w\in
S^1.
\end{align*}
Hence $w^{-2} (d \gamma_t/d t)(w) $ is the boundary value of the
holomorphic function $ -\sum_{n=1}^{\infty}(d v_{-n}/d t) z^{n-1}$
on $\mathbb{D}$. Since $F$ is holomorphic on $\Omega^+$,
\[
\frac{(d  \gamma^t/d t)\circ F(z)}{F(z)^2}F'(z)
\]
is the boundary value of the holomorphic function
\begin{align*}
-\sum_{n=1}^{\infty} \frac{d v_{-n}}{d t} F(z)^{n-1} F'(z)
\end{align*}
on $\Omega^+$. On the other hand, $\left(d
\left(1/\gamma^{-1}\right)/dt\right)(w)$ is the boundary value of
the holomorphic function $\sum_{n=1}^{\infty} (d v_n/d t)
z^{-n-1}$ on $\mathbb{D}^*$. Since $G$ is holomorphic on
$\Omega^-$,
\begin{align*}
\left(\frac{d}{d t} \frac{1}{(\gamma^t)^{-1}}\right)\circ G(z)
G'(z)
\end{align*}
is the boundary value of the holomorphic function
\begin{align*}
\sum_{n=1}^{\infty} \frac{d v_n}{d t} G(z)^{-n-1} G'(z)
\end{align*}
on $\Omega^-$. Lemma \ref{lemma1} then implies that we have a
holomorphic function on $\hat{\C}= \C\cup \{\infty\}$ which
vanishes at $\infty$. Hence it must be identically zero.
Consequently, $d\gamma_t/dt=0$ and the assertion follows.

\end{proof}
\subsection{The variations of the functions $v_m$ with respect
to $t_n$}
\begin{proposition}\label{Prop2}
Let $b_{m,n}$ be the generalized Grunsky coefficients of the pair
of univalent functions $(\left.f\right\vert_{\mathbb{D}},\left.
g\right\vert_{\mathbb{D}^*})$. The variation of $v_m$, $m\in \Z,
m\neq 0$ with respect to $t_n$, $n\in \Z$ is given by the
following:
\begin{align*}
\frac{\pa v_m}{\pa t_n} = -|mn| b_{n,m}, \hspace{1cm}n\neq 0,
\hspace{1cm} \text{and} \hspace{1cm} \frac{\pa v_m}{\pa t_0} = |m|
b_{0,m}.
\end{align*}

\end{proposition}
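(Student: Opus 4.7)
\medskip

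\noindent\textbf{Proof proposal.} The plan is to adapt the argument of Proposition \ref{Prop1}: fix a variation in which only $t_n$ changes, use Lemma \ref{lemma1} to glue the resulting Fourier data through $F$ and $G$, and then recognize the glued object as a specific meromorphic function on $\hat\C$ whose series expansion is governed by the Faber polynomials $P_k,Q_k$ of $(f,g)$.

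Concretely, take a curve $\gamma_t$ with $dt_k/dt=\delta_{kn}$ at $t=0$. From \eqref{eq8_4_1} and \eqref{eq8_4_2} one reads off
\begin{align*}
A_n(w) := w^{-2}\frac{\pa\gamma}{\pa t_n}(w), \qquad B_n(w) := \frac{\pa}{\pa t_n}\Bigl(\frac{1}{\gamma^{-1}}\Bigr)(w),
\end{align*}
each expressed as an explicit Laurent term (coming from the coefficient $t_n$) plus a series in the unknowns $\pa v_k/\pa t_n$. Choose primitives $\tilde A_n$, $\tilde B_n$ (with $\tilde A_n'=A_n$ and $\tilde B_n'=B_n$) so that $(\tilde A_n\circ F)'(z)=A_n(F(z))F'(z)$ and similarly for $\tilde B_n\circ G$. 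Lemma \ref{lemma1} then forces $(\tilde A_n\circ F)'=(\tilde B_n\circ G)'$ on $\mathcal{C}$, and integration produces a constant $C_n$ with $\tilde A_n(F(z))-C_n=\tilde B_n(G(z))$ for $z\in\mathcal{C}$.

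Define $\Phi_n(z):=\tilde A_n(F(z))-C_n$ on $\Omega^+$ and $\Phi_n(z):=\tilde B_n(G(z))$ on $\Omega^-$; by the above these agree on $\mathcal{C}$, so $\Phi_n$ extends to a function on $\hat\C$ whose singularities can only come from the leading pieces of the primitives. For $n=-p$ with $p\ge 1$, $\tilde A_{-p}(w)=w^{-p}+\cdots$, so $\Phi_{-p}$ has pole of order $p$ at $0$, is holomorphic elsewhere and vanishes at $\infty$; hence $\Phi_{-p}$ is a polynomial in $1/z$ of degree $p$ with no constant term. For $n=p\ge 1$, $\tilde B_p(w)=w^p+\cdots$, so $\Phi_p$ has pole of order $p$ at $\infty$ only, and is a polynomial of degree $p$ in $z$. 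For $n=0$, both primitives are $\log w$ modulo analytic parts, so $d\Phi_0$ is a meromorphic differential on $\hat\C$ with simple poles only at $0$ and $\infty$ of residues $+1$ and $-1$, forcing $\Phi_0(z)=\log z+\text{const}$.

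To identify $\Phi_n$ explicitly and read off $\pa v_m/\pa t_n$, compare with the power-series identities \eqref{iden1}. Since $Q_p(w)=(F(w)^{-p})_{\le 0}$, the polar part of $\tilde A_{-p}(F(z))$ at $0$ equals the polar part of $Q_p(z)$, and with the constant term $-pb_{-p,0}$ of $Q_p$ this pins down $\Phi_{-p}(z)=Q_p(z)+pb_{-p,0}$ (equivalently $C_{-p}=-pb_{-p,0}$); likewise $\Phi_p(z)=P_p(z)$, and for $n=0$ the normalizations at $z=0,\infty$ yield $C_0+\tilde C=-\log a_1$ and $\tilde C=-\log b$. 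Substituting $z=f(\zeta)$ in the $\Omega^+$ identity (or $z=g(\zeta)$ in the $\Omega^-$ identity) and using
\begin{align*}
P_n(g(\zeta))=\zeta^n+n\!\sum_{m\ge 1}\!b_{n,m}\zeta^{-m},\qquad Q_n(f(\zeta))=\zeta^{-n}+n\!\sum_{m\ge 1}\!b_{-n,-m}\zeta^m,
\end{align*}
together with the analogous formulas $P_n(f(\zeta))$, $Q_n(g(\zeta))$ and $\log(f/\zeta)$, $\log(g/\zeta)$ from \eqref{iden1}, converts each identity into an equality of power series in $\zeta$. Matching coefficients yields the six expected formulas; e.g.\ for $n=p\ge 1$, $m\ge 1$ the identity $\tilde B_p(G(z))=P_p(z)$ gives $\zeta^p-\sum_k\frac{1}{k}\tfrac{\pa v_k}{\pa t_p}\zeta^{-k}=\zeta^p+p\sum_m b_{p,m}\zeta^{-m}$, hence $\pa v_m/\pa t_p=-mp\,b_{p,m}$.

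The main obstacle is the $n=0$ case: one must pick consistent branches of $\log F$ on $\Omega^+$ and $\log G$ on $\Omega^-$ and verify that the logarithmic monodromies cancel on $\mathcal{C}$, so that $\Phi_0$ is well-defined up to an additive constant on $\hat\C$; once this is in place, the variation $\pa v_m/\pa t_0=|m|b_{0,m}$ falls out of the expansions of $\log(f(\zeta)/\zeta)$ and $\log(g(\zeta)/\zeta)$ in \eqref{iden1}.
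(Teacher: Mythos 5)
Your proposal is correct and follows essentially the same route as the paper: read off the Fourier data of $\pa\gamma/\pa t_n$ and $\pa(1/\gamma^{-1})/\pa t_n$, glue through $F$ and $G$ via Lemma \ref{lemma1} into a global meromorphic function on $\hat{\C}$, identify it with the Faber polynomials $P_n$, $Q_n$ (resp.\ $\log z$ for $n=0$), and compare coefficients using \eqref{iden1}. The only cosmetic difference is that you integrate before gluing (which forces you to track the $\log$ branches for $n=0$), whereas the paper glues the derivatives first — obtaining a polynomial of degree $n-1$, a multiple of $1/z$, or a polynomial in $1/z$ — and integrates afterwards on each domain separately.
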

\begin{proof}
We follow almost the same idea as the proof of Proposition
\ref{Prop1}. For $n \geq 1$, we have
\begin{align*}
\frac{\pa \gamma}{\pa t_n} (w) &= -\sum_{m=1}^{\infty} \frac{\pa
v_{-m}}{\pa t_n} w^{m+1},\\
\frac{\pa }{\pa t_n}\left(\frac{1}{\gamma^{-1}}\right) (w)& =
nw^{n-1}+\sum_{m=1}^{\infty} \frac{\pa v_m}{\pa t_n} w^{-m-1},
\hspace{1cm} w\in S^1.
\end{align*}
Hence restricted to $\mathcal{C}$,
\begin{align*}
\frac{(\pa \gamma/\pa t_n)\circ F(z)}{F(z)^2}F'(z)
\end{align*}
is the boundary value of the holomorphic function
\begin{align*}
-\sum_{m=1}^{\infty} \frac{\pa v_{-m}}{\pa t_n} F(z)^{m-1} F'(z)
\end{align*}
on $\Omega^+$, and
\begin{align*}
\frac{\pa }{\pa t_n}\left(\frac{1}{\gamma^{-1}}\right)\circ G(z)
G'(z)
\end{align*}
is the boundary value of the meromorphic function
\begin{align*}
\left(nG(z)^{n-1} + \sum_{m=1}^{\infty} \frac{\pa v_m}{\pa t_n}
G(z)^{-m-1}\right)G'(z)
\end{align*}
on $\Omega^-$, which has a pole of order $n-1$ at $\infty$. Lemma
\ref{lemma1} implies that they combine to define a meromorphic
function on the plane, with a single pole of order $n-1$ at
$\infty$. Hence it is a polynomial of degree $n-1$, which we
denote by $\sum_{m=1}^{n} \alpha_{n,m} z^{m-1}$. Therefore,
\begin{align}\label{iden3}
-\sum_{m=1}^{\infty} \frac{\pa v_{-m}}{\pa t_n}F(z)^{m-1}
F'(z)=&\sum_{m=1}^{n} \alpha_{n,m} z^{m-1}, \; z\in \Omega^+,\\
\left(nG(z)^{n-1} + \sum_{m=1}^{\infty} \frac{\pa v_m}{\pa t_n}
G(z)^{-m-1}\right)G'(z)=&\sum_{m=1}^{n} \alpha_{n,m} z^{m-1},
\;z\in \Omega^-\nonumber.
\end{align}
On $\Omega^-$, the second equation gives
\begin{align}\label{iden2}
G(z)^n -\sum_{m=1}^{\infty}\frac{1}{m} \frac{\pa v_m}{\pa t_n}
G(z)^{-m}=&\sum_{m=1}^{n} \frac{\alpha_{n,m}}{m}
z^{m}+\alpha_{n,0},
\end{align}
where $\alpha_{n,0}$ is an integration constant. Comparing
coefficients, we conclude that
\begin{align*}
\sum_{m=1}^{n} \frac{\alpha_{n,m}}{m}
z^{m}+\alpha_{n,0}=(G(z)^n)_{\geq 0}=P_n(z),
\end{align*}
where $P_n(z)$ is the $n$-th Faber polynomial of $g$. Let
$z=g(\zeta)$ in \eqref{iden2}, and compare to the series expansion
of $P_n(g(\zeta))$ given in \eqref{iden1}, we conclude that for
$m, n\geq 1$,
\begin{align*}
\frac{\pa v_m}{\pa t_n} = -mn b_{nm}.
\end{align*}
Now the first equation in \eqref{iden3} gives for $z\in \Omega^+$,
\begin{align*}
-\sum_{m=1}^{\infty}\frac{1}{m} \frac{\pa v_{-m}}{\pa t_n}
F(z)^{m} =P_n(z)+\tilde{\alpha}_{n,0},
\end{align*}
where $\tilde{\alpha}_{n,0}$ is another integration constant.
Putting $z=f(\zeta)$ and compare to the series expansion of
$P_n(f(\zeta))$ given in \eqref{iden1}, we conclude that for $m,
n\geq 1$,
\begin{align*}
\frac{\pa v_{-m}}{\pa t_n} = -mn b_{n, -m}.
\end{align*}
 Now we consider the differentiation with respect to $t_0$.
 Analogous argument shows that restricted to $\mathcal{C}$,
 \begin{align*}
 \frac{(\pa \gamma/\pa t_0)\circ
F(z)}{F(z)^2}F'(z) =\frac{\pa }{\pa
t_0}\left(\frac{1}{\gamma^{-1}}\right)\circ G(z) G'(z)
 \end{align*}
 is the boundary values of the meromorphic function
 \begin{align*}
\left(\frac{1}{F(z)}-\sum_{m=1}^{\infty} \frac{\pa v_{-m}}{\pa
t_0} F(z)^{m-1}\right) F'(z)
 \end{align*}
 on $\Omega^+$, with a simple pole at the origin; and of the
 holomorphic function
 \begin{align*}
\left(\frac{1}{G(z)} + \sum_{m=1}^{\infty} \frac{\pa v_m}{\pa t_0}
G(z)^{-m-1}\right)G'(z)
\end{align*}
on $\Omega^-$. They combine to give a meromorphic function
on $\hat{\C}$, with a single simple pole at the origin. Hence this
function must be a multiple of the function $1/z$. Looking at the
$z^{-1}$ term of both functions, we conclude that it is $1/z$.
After integration, we have
\begin{align*}
\log F(z)-\sum_{m=1}^{\infty} \frac{1}{m}\frac{\pa
v_{-m}}{\pa t_0} F(z)^{m}=\log z +\alpha_{0}, \hspace{0.5cm}&z\in \Omega^+, \\
\log G(z) - \sum_{m=1}^{\infty}\frac{1}{m} \frac{\pa v_m}{\pa t_0}
G(z)^{-m}=\log z + \tilde{\alpha}_{0}, \hspace{0.5cm}&z\in
\Omega^-.
\end{align*}
Put $z=f(\zeta)$ into the first equation and $z=g(\zeta)$ into the
second equation, and compare with \eqref{iden1}, we find that for
$m\neq 0$,
\begin{align*}
\frac{\pa v_m}{\pa t_0} = |m| b_{0, m}.
\end{align*}
Finally differentiation with respect to $t_{-n}$, $n\geq 1$ shows
that the function which equals to
\begin{align*}
\left(-nF(z)^{-n-1}-\sum_{m=1}^{\infty} \frac{\pa v_{-m}}{\pa
t_{-n}} F(z)^{m-1}\right) F'(z),
\hspace{1cm} z\in \Omega^+,\\
\left(  \sum_{m=1}^{\infty} \frac{\pa v_m}{\pa t_{-n}}
G^{-m-1}\right)G'(z),\hspace{1cm} z\in \Omega^-,
\end{align*}
is a meromorphic function on $\hat{\C}$ with a single pole at the
origin of order $-n-1$. The second equation shows that this
function vanishes to the second order at $\infty$. Hence we can
write this function as $\sum_{m=1}^{n} \beta_{n,m} z^{-m-1}$.
After integration, we get
\begin{align*}
F(z)^{-n}-\sum_{m=1}^{\infty} \frac{1}{m}\frac{\pa v_{-m}}{\pa
t_{-n}} F(z)^{m}=\beta_{n,0}-\sum_{m=1}^{n} \frac{\beta_{n,m}}{m}
z^{-m},
\hspace{1cm} z\in \Omega^+,\\
  -\sum_{m=1}^{\infty} \frac{1}{m}\frac{\pa v_m}{\pa t_{-n}}
G(z)^{-m}=\tilde{\beta}_{n,0}-\sum_{m=1}^{n} \frac{\beta_{n,m}}{m} z^{-m},
\hspace{1cm} z\in \Omega^-
\end{align*}
Comparing powers in the first equation, we conclude that
$\beta_{n,0}-\sum_{m=1}^{n} \beta_{n,m} z^{-m}/m=Q_n(z)$, where
$Q_n(z)$ is the $n$-th Faber polynomial of $f$. Finally comparing
with the series expansion of $Q_n(f(\zeta))$ and $Q_n(g(\zeta))$ in \eqref{iden1}
give the desired result
\begin{align*}
\frac{\pa v_m}{\pa t_{-n}}= -|mn| b_{-n,m},
\end{align*}
for $n\geq 1$ and all $m\neq 0$.
\end{proof}

To study the variation of $v_0$ with respect to $t_n$, $n\in \Z$,
we need the following 'calculus formula'. Given $\gamma_t$ a one
parameter family in $\Homeo_C(S^1)$, we denote by $\mathcal{C}^t$
the associated $C^1$ curves. If $h(z, \z, t)$ is a $C^1$ function
in a neighborhood of $\mathcal{C}^t$, then
\begin{align}\label{calculus}
\frac{d}{dt}\Bigr\vert_{t=0} \left(\oint_{\mathcal{C}^t} h(z,
\z,t) dz\right)=\oint_{\mathcal{C}} \frac{dh}{dt}(z,
\z,t)dz+\frac{\pa h}{\pa \z}(z,\z,t)\left(\frac{\pa \ov{w^t}}{\pa
t} dz-\frac{\pa w^t}{\pa t} d\z\right)\Bigr\vert_{t=0}
\end{align}
Here $w^t(z,\z)$ is a family of $C^1$ functions such that
$w^0(z,\z)=\id$, $w^t(\mathcal{C}^t) = \mathcal{C}$. In the
following, we are mostly dealing with function $h$ such that
$h_{\z}=0$ on $\mathcal{C}$. In this case the second term is
missing.

\begin{proposition}\label{Prop3}
The variation of $v_0$ with respect to $t_n$, $n\in \Z$ is given
by
\begin{align*}
\frac{\pa v_0}{\pa t_n} = |n|b_{n,0}=\frac{\pa v_n}{\pa t_0},
\hspace{0.5cm}n\neq 0, \hspace{1cm} \frac{\pa v_0}{\pa t_0} =
-2\log b=2\log a_1.
\end{align*}
\end{proposition}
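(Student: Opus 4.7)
The plan is to follow the template of the proof of Proposition \ref{Prop2}: differentiate $v_0$ with respect to $t_n$ term by term, use Lemma \ref{lemma1} together with the boundary-value identifications established in the proof of Proposition \ref{Prop2} to rewrite the result as a contour integral of a meromorphic function on $\hat{\C}$, and then read off the Grunsky coefficient $b_{n,0}$ from the series expansions \eqref{iden1} via residues.

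Concretely, for $n \geq 1$ I would first differentiate the two fixed-contour $S^1$ integrals in the definition of $v_0$ termwise, using the expansions \eqref{eq8_4_1}--\eqref{eq8_4_2} for $\pa \gamma/\pa t_n$ and $\pa(1/\gamma^{-1})/\pa t_n$. For the third integral $-\frac{1}{2\pi i}\oint_{\mathcal{C}}(G/F)(dz/z)$ I would apply the calculus formula \eqref{calculus}, writing $G(z)/F(z)=\gamma(F(z))/F(z)$ on $\mathcal{C}$ to organize the displacement contribution against the $S^1$ terms. Changing variables $w=F(z)$ in the first $S^1$ piece and $w=G(z)$ in the second converts everything to integrals over $\mathcal{C}$, so that $\pa v_0/\pa t_n$ becomes a contour integral of the form $\oint_{\mathcal{C}}\log(f(w)/w)\,X_+(z)\,dz+\oint_{\mathcal{C}}\log(g(w)/w)\,X_-(z)\,dz$ plus residues. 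The key point is then the identification $(\pa\gamma/\pa t_n)\circ F\cdot F'/F^2 = P_n'(z)$ on $\mathcal{C}$, extracted from the proof of Proposition \ref{Prop2} (where $P_n$ is the $n$-th Faber polynomial of $g$): the data $X_+$ and $X_-$ glue via Lemma \ref{lemma1} to a global meromorphic function on $\hat{\C}$ with poles only at $0$ and $\infty$. An integration by parts shifts $P_n'$ onto the logarithms $\log(f/w)$ and $\log(g/w)$, whose Laurent coefficients are $b_{-m,0}$ and $b_{m,0}$ respectively by \eqref{iden1}; a residue computation at $0$ and $\infty$ then extracts precisely $n\, b_{n,0}$. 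For $n\leq -1$ the argument is symmetric, with the Faber polynomial $Q_n$ of $f$ replacing $P_n$, producing $|n|\,b_{n,0}$.

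For $\pa v_0/\pa t_0$, the same procedure applies with the $t_0$-identification from Proposition \ref{Prop2} replacing $P_n'(z)$ by $1/z$; after antidifferentiation one obtains $\log z$ multiplying $\log(f/w)$ and $\log(g/w)$, whose residues at $0$ and $\infty$ yield $-2b_{0,0}$. Combined with the identity $b_{0,0}=-\log a_1$ --- obtained by setting $\zeta=0$ in the defining expansion of $\log((\mathfrak{F}(z)-\mathfrak{F}(\zeta))/(z-\zeta))$ and comparing with \eqref{iden1}, together with $a_1 b=1$ --- this gives $\pa v_0/\pa t_0=2\log a_1=-2\log b$. The main obstacle is the careful bookkeeping of the integration constants that appeared in \eqref{iden2} during the proof of Proposition \ref{Prop2}: the third term $-\frac{1}{2\pi i}\oint_{\mathcal{C}}G(z)/(F(z)z)\,dz$ in the definition of $v_0$ is exactly what is needed to absorb those constants so that the final identities come out clean, and threading this cancellation through the calculus formula \eqref{calculus} is the delicate part of the argument.
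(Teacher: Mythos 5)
Your overall strategy matches the paper's: differentiate under the integral sign, show the leftover terms cancel against the derivative of the $\mathcal{C}$-integral, and evaluate what remains by residues against the expansions \eqref{iden1}. The residue step is fine, though the paper does it more directly: once one knows
\begin{align*}
\frac{\pa v_0}{\pa t_n}=\frac{1}{2\pi i}\oint_{S^1} \left( \log
\frac{f(w)}{w}\right)\frac{(\pa \gamma/\pa
t_n)(w)}{w^2}\,dw-\frac{1}{2\pi i}\oint_{S^1}\left(\log\frac{g(w)}{w}\right)\left( \frac{\pa}{\pa
t_n}\frac{1}{\gamma^{-1}} \right)(w)\, dw,
\end{align*}
then for $n\geq 1$ the first integral has only nonnegative powers of $w$ and vanishes, and in the second only the leading term $nw^{n-1}$ of $\pa(1/\gamma^{-1})/\pa t_n$ pairs with $-b_{n,0}w^{-n}$ to leave a residue; the gluing $(\pa\gamma/\pa t_n)\circ F\cdot F'/F^2=P_n'$ from Proposition \ref{Prop2} is not actually needed, although your route through it on $\mathcal{C}$ gives the same answer. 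Your identity $b_{0,0}=\log b=-\log a_1$ is also correct.

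The genuine problem is your account of the third term $-\frac{1}{2\pi i}\oint_{\mathcal{C}}\frac{G(z)}{F(z)}\frac{dz}{z}$. It has nothing to do with the integration constants of \eqref{iden2}; those never enter the computation of $\pa v_0/\pa t_n$. What this term cancels are the contributions in which the factors $\log(f(w)/w)$ and $\log(g(w)/w)$ are themselves differentiated --- the coefficients of $f$ and $g$ depend on the $t_n$, so differentiating the two $S^1$ integrals produces $\frac{\pa\log f}{\pa t_n}\frac{\gamma}{w^2}$ and $\frac{\pa\log g}{\pa t_n}\frac{1}{\gamma^{-1}}$ pieces. The content of the paper's proof is precisely the pointwise identity on $\mathcal{C}_t$
\begin{align*}
\frac{\pa \log f_t}{\pa t}\circ F_t(z)\, \frac{G_t(z)F_t'(z)}{F_t(z)^2}
-\frac{\pa \log g_t}{\pa t} \circ G_t(z)\,
\frac{G_t'(z)}{F_t(z)}-\frac{1}{z}\left(\frac{\pa }{\pa
t}\frac{G_t}{F_t}\right)(z)=0 ,
\end{align*}
verified by the chain rule using $\pa G_t/\pa t=-(\pa g_t/\pa t)\circ G_t\cdot G_t'$, $\pa F_t/\pa t=-(\pa f_t/\pa t)\circ F_t\cdot F_t'$ and $f_t\circ F_t=\id$, combined with \eqref{calculus}. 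Without isolating this as the cancellation to be proved, your plan offers no argument for why the leftover pieces disappear, and that is exactly the step the paper singles out. Replace the ``integration constants'' claim with this identity and your proof goes through.
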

\begin{proof}
We can check that for any $t$, the expression
\begin{align*}
\frac{\pa \log f_t}{\pa t}\circ F_t(z) \frac{G_t(z)F_t'(z)}{F_t(z)^2}
-\frac{\pa \log g_t}{\pa t} \circ G_t(z)
\frac{G_t'(z)}{F_t(z)}-\frac{1}{z}\left(\frac{\pa }{\pa
t}\frac{G_t}{F_t}\right)(z)  ,
\end{align*}
vanishes identically on $ \mathcal{C}_t$. Together with
\eqref{calculus}, we have
\begin{align*}
\frac{\pa v_0}{\pa t_n}=&\frac{1}{2\pi i}\oint_{S^1} \left( \log
\frac{f(w)}{w}\right)\frac{(\pa \gamma/\pa
t_n)(w)}{w^2}-\left(\log\frac{g(w)}{w}\right)\left( \frac{\pa}{\pa
t_n}\frac{1}{\gamma^{-1}} \right)(w) dw
\end{align*}
Using the series expansion for each term give the desired result.
\end{proof}

Since the Grunsky coefficients are symmetric, from  Proposition
\ref{Prop2} and Proposition \ref{Prop3}, formally there should
exist a function $\mathcal{F}$ on $\Homeo_C{S^1}$ such that $\pa
\F/\pa t_n = v_n$. We define this function in the next section.

\subsection{Tau function}\label{invtau}

First for $\gamma \in \Homeo_C(S^1)$, we define the following two
functions.
\begin{align*}
\psi(z)&=\sum_{n=1}^{\infty} \frac{v_{-n}}{n} z^n,\hspace{0.5cm}
z\in \mathbb{D},
\hspace{1.5cm}\phi(z)=\sum_{n=1}^{\infty}\frac{v_n}{n} z^{-n},
\hspace{0.5cm} z\in \mathbb{D}^*.
\end{align*}
Since $\gamma$ and $1/\gamma^{-1}$ are $C^1$ functions on $S^1$,
these are holomorphic functions on $\mathbb{D}$ and $\mathbb{D}^*$
respectively. We define the tau function $\tau$  by the following formula
\begin{align*}
4\log \tau = 2t_0 v_0 -t_0^2 &+\frac{1}{2\pi i}\oint_{S^1}
\frac{1}{\gamma^{-1}(w)}\left(w\phi'(w)+
2\phi(w)\right) dw \\
&+\frac{1}{2\pi i}\oint_{S^1} \frac{\gamma(w)}{w^2}\left(w\psi'(w)
- 2\psi(w)\right) dw.
\end{align*}

\begin{remark}
When the sum converges absolutely, the tau function can be written
explicitly as
\begin{align*}
4\log \tau= 2t_0 v_0 -t_0^2 -\sum_{n=1}^{\infty}(n-2)(t_n
v_n+t_{-n}v_{-n}).
\end{align*}
We can compare this to the explicit formula for the tau function
of Wiegmann and Zabrodin \cite{WZ, KKMWZ}. We are going to explain
this coincidence in a latter discussion.
\end{remark}

We want to prove that the free energy $\F=\log \tau$ generates the
variables $v_n$, $n\in\Z$.
First, from Proposition \ref{Prop2} and the identities in
\eqref{iden1}, we have the following variational formulas:
\begin{lemma}\label{lemma2}
The variations of the functions $\psi$ and $\phi$ with respect to
$t_n$, $t_{-n}$, $n\geq 1 $ and $t_0$ are given by
\begin{align*}
\frac{\pa \psi}{\pa t_n}(z)&=-P_n(f(z)) + n b_{n,0}, \hspace{1.3cm}
\frac{\pa \phi}{\pa t_n}(z) =-P_n(g(z))+ z^n,\\
\frac{\pa \psi}{\pa t_0}(z)&= -\log \frac{f(z)}{z}+\log
a_1,\hspace{1.5cm}\frac{\pa
\phi}{\pa t_0}(z) = -\log \frac{g(z)}{z}+\log b,\\
\frac{\pa \psi}{\pa t_{-n}}(z)&= -Q_n(f(z))+
z^{-n},\hspace{1.5cm}\frac{\pa \phi}{\pa
t_{-n}}(z)=-Q_n(g(z))-nb_{-n,0}.
\end{align*}
\end{lemma}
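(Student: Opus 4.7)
The plan is to differentiate the power series defining $\psi$ and $\phi$ term by term, substitute the variational formulas for $v_m$ from Proposition \ref{Prop2}, and then recognize the resulting sums using the Faber-polynomial expansions \eqref{iden1}. Since $\gamma \in \Homeo_C(S^1)$ is $C^1$ and $f$, $g$ are univalent on the closed discs, all the series involved converge on $\mathbb{D}$ and $\mathbb{D}^*$ respectively, so term-by-term differentiation in $t_n$ is justified.

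Concretely, for $n\geq 1$ the definition of $\psi$ gives
\begin{align*}
\frac{\pa \psi}{\pa t_n}(z)=\sum_{m=1}^{\infty}\frac{1}{m}\frac{\pa v_{-m}}{\pa t_n}z^m
=-n\sum_{m=1}^{\infty}b_{n,-m}z^m,
\end{align*}
by Proposition \ref{Prop2}. The second identity in the first line of \eqref{iden1} reads $P_n(f(\zeta))=nb_{n,0}+n\sum_{m\geq 1}b_{n,-m}\zeta^m$, so the sum above equals $-P_n(f(z))+nb_{n,0}$, which is the desired formula. The computation for $\pa\phi/\pa t_n$ is identical in spirit: it produces $-n\sum b_{n,m}z^{-m}$, which the formula $P_n(g(\zeta))=\zeta^n+n\sum b_{nm}\zeta^{-m}$ rewrites as $-P_n(g(z))+z^n$.

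For the $t_0$ derivatives I would use $\pa v_m/\pa t_0=|m|b_{0,m}$ together with the symmetry $b_{0,\pm m}=b_{\pm m,0}$. The sums $\sum b_{-m,0}z^m$ and $\sum b_{m,0}z^{-m}$ are then exactly the series appearing in the first line of \eqref{iden1} for $\log(f(z)/z)$ and $\log(g(z)/z)$, which immediately yield the stated closed forms involving $\log a_1$ and $\log b$. Finally, the $t_{-n}$ derivatives with $n\geq 1$ are handled by the same pattern, this time invoking the third line of \eqref{iden1}: $Q_n(f(\zeta))=\zeta^{-n}+n\sum b_{-n,-m}\zeta^m$ recovers $\pa\psi/\pa t_{-n}=-Q_n(f(z))+z^{-n}$, while $Q_n(g(\zeta))=-nb_{-n,0}+n\sum b_{m,-n}\zeta^{-m}$ (together with symmetry $b_{-n,m}=b_{m,-n}$) recovers $\pa\phi/\pa t_{-n}=-Q_n(g(z))-nb_{-n,0}$.

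There is no serious obstacle here: the proof is essentially a bookkeeping exercise matching the variational formulas of Proposition \ref{Prop2} against the six Faber/Grunsky expansions in \eqref{iden1}. The only point to keep straight is the symmetry $b_{m,n}=b_{n,m}$, which is needed to align the index structure of $\pa v_m/\pa t_n$ (where $n$ labels the time and $m$ the Fourier mode) with that of the Faber polynomial expansions in \eqref{iden1} (where the outer index of $P_n$ or $Q_n$ plays the role of $n$).
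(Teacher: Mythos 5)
Your proposal is correct and is exactly the argument the paper intends: the lemma is stated in the paper without a written proof, prefaced only by ``from Proposition \ref{Prop2} and the identities in \eqref{iden1}, we have the following variational formulas,'' and your term-by-term differentiation of the series for $\psi$ and $\phi$, substitution of the formulas $\pa v_{\pm m}/\pa t_n$ from Proposition \ref{Prop2}, and matching against the six expansions in \eqref{iden1} (using the symmetry $b_{m,n}=b_{n,m}$ where needed) is precisely that bookkeeping. All six resulting identities check out as you wrote them.
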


Now we prove our claim.
\begin{proposition}\label{thm1}
The tau function generates the functions $v_n$, namely
\begin{align*}
\frac{\pa \log \tau}{\pa t_n}= v_n.
\end{align*}
for all $n\in \Z$.
\end{proposition}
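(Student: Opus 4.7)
I will prove $\pa\F/\pa t_n = v_n$ by differentiating the contour-integral expression for $4\log\tau$ directly, using that $S^1$ is fixed so Leibniz applies under the integral sign. The task reduces to computing $\pa_{t_n}$ of each factor in the two integrands. The variations of $1/\gamma^{-1}(w)$ and $\gamma(w)/w^2$ come from the Fourier expansions \eqref{eq8_4_1}, \eqref{eq8_4_2} together with Proposition \ref{Prop2} for $\pa v_m/\pa t_n$, while the variations of $\phi$ and $\psi$ are exactly those supplied by Lemma \ref{lemma2}. The algebraic piece $2t_0v_0-t_0^2$ is handled by Proposition \ref{Prop3}.

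I will treat the three sign cases $n\geq 1$, $n\leq -1$, $n=0$ separately, as $t_n$ enters different factors in each. After substituting Lemma \ref{lemma2}, each integrand decomposes into a \emph{Faber polynomial part} involving $P_n\circ g$ or $Q_n\circ f$ and an elementary \emph{monomial part} involving $w^{\pm n}$. Terms of the form $\oint(1/\gamma^{-1})\,w\pa_{t_n}\phi'(w)\,dw$ and $\oint(\gamma/w^2)\,w\pa_{t_n}\psi'(w)\,dw$ I will integrate by parts to eliminate the derivative; together with the explicit $2\phi$ and $-2\psi$ contributions already present in the original integrand, this combines each pair into a single residue calculation whose sign structure exactly accounts for the $\pm 2$ coefficients.

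For the monomial contributions, residue evaluation on $S^1$ extracts the Fourier coefficient $v_n$ of $\gamma$ or $1/\gamma^{-1}$ directly from \eqref{eq8_4_1}, \eqref{eq8_4_2}. For the Faber polynomial contributions, I will change variables via $z=g(w)$ or $z=f(w)$, moving $S^1$ onto $\mathcal{C}$, and apply the expansions \eqref{iden1}, \eqref{iden4} of $P_n\circ g$, $P_n\circ f$, $Q_n\circ g$, $Q_n\circ f$ to extract Laurent coefficients in terms of Grunsky coefficients $b_{mn}$. The symmetry $b_{mn}=b_{nm}$ together with Proposition \ref{Prop2} then make the Grunsky pieces cancel cleanly against the $t_m$-dependent pieces of $\gamma$ and $1/\gamma^{-1}$, leaving precisely $4v_n$. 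The $n=0$ case requires extra care: $\pa_{t_0}\phi$ and $\pa_{t_0}\psi$ contain the logarithmic terms $\log(g/z)$ and $\log(f/z)$, producing logarithmic contour integrals that are absorbed exactly by the extra piece $-(1/2\pi i)\oint_{\mathcal{C}}(G/F)(dz/z)$ built into the definition of $v_0$, with the $-t_0^2$ monomial supplying the missing constant via Proposition \ref{Prop3}.

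The main obstacle is coordinated bookkeeping rather than any single deep estimate: several integrals, each spawning several terms under differentiation, must cancel and reassemble in precisely the right way. The unifying technical step is the change of variables from $w\in S^1$ to $z\in\mathcal{C}$, which aligns the $\phi$- and $\psi$-integrals with the Faber polynomial expansions \eqref{iden1}, \eqref{iden4}; without this alignment the Grunsky-coefficient terms and the monomial terms cannot be directly compared. I anticipate the $n=0$ case to be the trickiest, both because of the logarithms and because the $2t_0v_0 - t_0^2$ term contributes nontrivially to $\pa_{t_0}(4\log\tau)$ beyond the differentiation of the integrals themselves.
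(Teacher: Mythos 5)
Your overall architecture coincides with the paper's: differentiate under the integral sign over the fixed circle $S^1$, split $\pa_{t_n}(4\log\tau)$ into the piece where the derivative hits $1/\gamma^{-1}$ and $\gamma/w^2$ (a residue computation using \eqref{eq8_4_1}, \eqref{eq8_4_2} and Proposition \ref{Prop2}, giving $-(n-2)v_n$), the piece where it hits $\phi$ and $\psi$ (via Lemma \ref{lemma2}), and the piece $2t_0v_0-t_0^2$ (via Proposition \ref{Prop3}), with a case split on the sign of $n$. The gap is in how you dispose of the Faber-polynomial contributions $-P_n\circ g$ and $-P_n\circ f$ supplied by Lemma \ref{lemma2}. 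You propose to expand them through \eqref{iden1} into Grunsky series and let ``the symmetry $b_{mn}=b_{nm}$ together with Proposition \ref{Prop2}'' cancel them against the $t_m$-dependent pieces of $\gamma$ and $1/\gamma^{-1}$. If you actually run that residue computation on $S^1$, the two Faber contributions produce $n\sum_{m\geq 1}m(m-2)\bigl(t_m b_{n,m}+t_{-m}b_{n,-m}\bigr)$ (with the monomial $z^n$ and the constant $nb_{n,0}$ contributing $(n+2)v_n$ and $-2nt_0b_{n,0}$ separately, or being absorbed into the series). Nothing else in the derivative is available to cancel this sum, and the identity you would then need, namely $n\sum_m m(m-2)(t_m b_{n,m}+t_{-m}b_{n,-m})+2nt_0 b_{n,0}=(n+2)v_n$, while true, is not a formal consequence of Grunsky symmetry or of Proposition \ref{Prop2}; it encodes genuine analytic information about how the Fourier coefficient $v_n$ of $1/\gamma^{-1}$ is determined by the $t_m$ and the Grunsky coefficients.

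The missing mechanism is the following. After the change of variables $z=g(w)$, $z=f(w)$ that you correctly propose, the two Faber contributions become
\[
-\frac{1}{2\pi i}\oint_{\mathcal{C}}\left(\frac{G}{F}P_n'+2\frac{G'}{F}P_n\right)dz
\;-\;\frac{1}{2\pi i}\oint_{\mathcal{C}}\left(\frac{G}{F}P_n'-2\frac{GF'}{F^2}P_n\right)dz ,
\]
and the two integrands combine, up to an overall constant, into the exact form $d\bigl(G(z)P_n(z)/F(z)\bigr)$, whose integral over the closed curve $\mathcal{C}$ vanishes. Neither integral vanishes on its own (note $G/F$ is holomorphic on neither side of $\mathcal{C}$); only the sum is exact, and this is precisely where the opposite relative signs $w\phi'+2\phi$ versus $w\psi'-2\psi$ in the definition of $\tau$ are used. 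Once this step is inserted, your bookkeeping closes up: $2nt_0b_{n,0}-(n-2)v_n+(n+2)v_n-2nt_0b_{n,0}=4v_n$, and the analogous exact forms built from $Q_n$ and from $\log$ handle $n\leq -1$ and $n=0$. Your description of the $n=0$ case is also slightly off target: $v_0$ enters $4\log\tau$ only through the product $2t_0v_0$, so its contribution is controlled by Proposition \ref{Prop3} rather than by re-deriving the $\oint_{\mathcal{C}}(G/F)\,dz/z$ term from the definition of $v_0$.
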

\begin{proof}
For $n\geq 1$, we have
\begin{align*}
&\frac{1}{2\pi i}\oint_{S^1} \left(\left(\frac{\pa}{\pa
t_n}\frac{1}{\gamma^{-1}}\right)(w)\left(w\phi'(w)+
2\phi(w)\right)  +\frac{(\pa\gamma/\pa
t_n)(w)}{w^2}\left(w\psi'(w) -
2\psi(w)\right)\right) dw\\
=&-(n-2)v_n,
\end{align*}
from their explicit series expansion. On the other hand, from
Lemma \ref{lemma2}, we have
\begin{align*}
&\frac{1}{2\pi i}\oint_{S^1}
\frac{1}{\gamma^{-1}(w)}\left(w\frac{\pa}{\pa w}\left(\frac{\pa
\phi}{\pa t_n}\right)(w)+
2\frac{\pa\phi}{\pa t_n}(w)\right) dw \\
&+\frac{1}{2\pi i}\oint_{S^1}
\frac{\gamma(w)}{w^2}\left(w\frac{\pa}{\pa w}\left(\frac{\pa
\psi}{\pa t_n}\right)(w) - 2\frac{\pa\psi}{\pa t_n}(w)\right) dw\\
=&(n+2)v_n - 2nt_0 b_{n,0} -\frac{1}{2\pi
i}\oint_{\mathcal{C}}\left(\frac{G(z)}{F(z)}P_n'(z) +
2\frac{G'(z)}{F(z)}P_n(z)\right)dz\\
&-\frac{1}{2\pi i}
\oint_{\mathcal{C}}\left(\frac{G(z)}{F(z)}P_n'(z)-2
\frac{G(z)F'(z)}{F(z)^2}P_n(z)\right)dz\\
=&(n+2)v_n -2nt_0b_{n,0}-\frac{1}{2\pi
i}\oint_{\mathcal{C}}d\left(\frac{G(z)}{F(z)}P_n(z)\right)\\=&(n+2)v_n -2nt_0b_{n,0}.
\end{align*}
Together with Proposition \ref{Prop3}, we have
\begin{align*}
4\frac{\pa \log \tau}{\pa t_n} = 2nt_0 b_{n,0} -(n-2)v_n +
(n+2)v_n-2nt_0b_{n,0}=4v_n.
\end{align*}
 The cases where $n=0$ and $n\leq -1$ are proved analogously.
\end{proof}
Combining this proposition with Proposition \ref{Prop2} and Proposition \ref{Prop3}, we have
\begin{align}\label{second}
\frac{\pa^2\F}{\pa t_m\pa t_n}
=\begin{cases}-|mn|b_{m,n},\hspace{1cm}&\text{if}\;\;m\neq 0, n\neq 0\\
|m| b_{m,0}, &\text{if}\;\;m\neq 0,n= 0\\
-2b_{0,0},&\text{if}\;\;m=n=0;
\end{cases}
\end{align}
where $\F=\log \tau$.
By Proposition \ref{Hirota}, we   conclude that
\begin{theorem}\label{th1}
The evolution of the functions
$( G, F)$ with respect to $t_n$ satisfies the dispersionless Toda
hierarchy \eqref{Lax}.\end{theorem}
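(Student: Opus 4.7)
The plan is to reduce Theorem \ref{th1} to an immediate application of Proposition \ref{Hirota}, since all of the genuinely analytic work has already been done in the preceding subsections. First I would combine Proposition \ref{thm1}, which gives $\pa \F / \pa t_n = v_n$, with Propositions \ref{Prop2} and \ref{Prop3}, which compute the variations of the $v_m$ with respect to the $t_n$ in terms of the generalized Grunsky coefficients of $(f, g)$. Using $\pa^2 \F / (\pa t_m \pa t_n) = \pa v_m / \pa t_n$, these three results together produce exactly the identity \eqref{second}:
\begin{align*}
\frac{\pa^2 \F}{\pa t_m \pa t_n} = \begin{cases} -|mn|\, b_{m,n} & \text{if } m \neq 0,\ n \neq 0, \\ |m|\, b_{m,0} & \text{if } m \neq 0,\ n = 0, \\ -2\, b_{0,0} & \text{if } m = n = 0. \end{cases}
\end{align*}
The symmetry under $m \leftrightarrow n$ required for the mixed partials to be well-defined is automatic from the symmetry $b_{m,n} = b_{n,m}$ of the Grunsky coefficients, so consistency across the three cases in Propositions \ref{Prop2} and \ref{Prop3} is built in.

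Next I would observe that \eqref{second} is precisely the hypothesis of Proposition \ref{Hirota} applied with $(\mathfrak{F}, \mathfrak{G}) = (f, g)$. That proposition then yields that the pair of inverse series $(\mathfrak{G}^{-1}, \mathfrak{F}^{-1}) = (G, F)$ satisfies the dispersionless Toda hierarchy \eqref{Lax}, which is exactly the assertion of Theorem \ref{th1}. The difficult parts of the argument — the holomorphic extension/Cauchy-type reasoning used to identify the ``boundary'' function as a polynomial or rational function in Proposition \ref{Prop2}, the calculus formula \eqref{calculus} used in Proposition \ref{Prop3}, and the careful contour-integral manipulations establishing $\pa \log \tau / \pa t_n = v_n$ in Proposition \ref{thm1} — have already been carried out.

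Consequently there is no substantive obstacle left at this final step. The only bookkeeping to verify is that the normalizations $g(z) = bz + O(1)$ near $\infty$ and $f(z) = a_1 z + O(z^2)$ near $0$, with $a_1 b = 1$, give inverse expansions of $G$ near $\infty$ and $F$ near $0$ that match the formal shape \eqref{series}, with a common leading coefficient $r(\boldsymbol{t})$; this is immediate from the normalization conditions $f(0) = 0$, $g(\infty) = \infty$, and $f'(0)g'(\infty) = 1$ imposed at the outset. With this identification in place, Proposition \ref{Hirota} applies verbatim and the theorem follows.
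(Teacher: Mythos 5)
Your proposal matches the paper's own argument exactly: the paper derives \eqref{second} by combining Proposition \ref{thm1} with Propositions \ref{Prop2} and \ref{Prop3}, and then invokes Proposition \ref{Hirota} with $(\mathfrak{F},\mathfrak{G})=(f,g)$ to conclude that $(G,F)$ solves the hierarchy. Your additional check that the normalizations of $F$ and $G$ fit the formal shape \eqref{series} is a harmless bookkeeping remark that the paper leaves implicit.
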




\section{The conformal mappings in conformal welding problem and dispersionless Toda hierarchy}

In this section, we want to prove that the evolution of the
functions $(g,f)$ in the conformal welding with respect to some
suitably defined $t_n$ satisfies the dispersionless Toda hierarchy. The
definitions of $t_n$ mimic what we did in the previous section, by
replacing $f$ with $F$ and $g$ with $G$. However, things become
more involved since we do not have the unit circle $S^1$ where we
can write down convergent series expansion for the functions
involved. Instead we have to work with the curves
$\mathcal{C}$.

\subsection{The variables $t_n$ and $v_n$}
Given $\gamma=g^{-1}\circ f\in \Homeo_C(S^1)$, we construct the
following differentiable functions on $\Homeo_C(S^1)$. We use the
same notation as the previous section since it shall not incur
confusion. For $n \geq 1$, we define
\begin{align*}
t_n &=\frac{1}{2\pi i
n}\oint_{S^1}\frac{(g(w))^{-n}}{f(w)}dg(w)=\frac{1}{2\pi i
n}\oint_{\mathcal{C}} \frac{z^{-n}}{f\circ G(z)} dz, \\
t_{-n} &=\frac{-1}{2\pi in}\oint_{S^1} g(w) (f(w))^{n-2} df(w)
=\frac{-1}{2\pi in}\oint_{\mathcal{C}} g\circ F(z) z^{n-2}
dz, \\
v_n &=\frac{1}{2\pi i
}\oint_{S^1}\frac{(g(w))^n}{f(w)}dg(w)=\frac{1}{2\pi i
}\oint_{\mathcal{C}} \frac{z^{n}}{f\circ G(z)} dz, \\
v_{-n} &=\frac{-1}{2\pi i}\oint_{S^1} g(w) (f(w))^{-n-2} df(w)
=\frac{-1}{2\pi i}\oint_{\mathcal{C}} g\circ F(z) z^{-n-2} dz;
\end{align*}
while for $n=0$,
\begin{align*}
t_0=&\frac{1}{2\pi i }\oint_{S^1}\frac{1}{f(w)}dg(w)=\frac{1}{2\pi
i }\oint_{\mathcal{C}} \frac{1}{f\circ G(z)} dz=\frac{1}{2\pi i
}\oint_{\mathcal{C}} \frac{g\circ F}{z^2} dz, \\
v_0=&\frac{1}{2\pi i} \oint_{\mathcal{C}}\left(\left( \log
\frac{F(z)}{z}\right) \frac{g\circ F(z)}{z^2}
-\left(\log\frac{G(z)}{z}\right) \frac{1}{f\circ
G(z)}\right) dz\\
&-\frac{1}{2\pi i} \oint_{S^1} \frac{g(z)}{f(z)} \frac{dz}{z}.
\end{align*}
Unlike the previous case where the $t_n$ and $v_n$ appear as
Fourier coefficients, we do not understand the significance of the
$t_n$ and $v_n$ here.

We also introduce the following functions:
\begin{align}\label{eq8_5_4}
S_{\pm}(z) =\frac{1}{2\pi i} \oint_{\mC} \frac{(1/f)\circ
G(w)}{w-z} dw, \hspace{1.5cm}\tilde{S}_{\pm}(z)=\frac{1}{2\pi i}
\oint_{\mC} \frac{g\circ F(w)}{w^2(w-z)} dw.
\end{align}
Here $S_+$ and $\tilde{S}_+$ are defined for $z\in\Omega^+$. They
are holomorphic functions on $\Omega^+$. In a neighborhood of the
origin, they have the series expansion
\begin{align}\label{eq8_4_3}
S_+(z) = \sum_{n=1} nt_n z^{n-1}, \hspace{1.5cm} \tilde{S}_{+} (z)
= -\sum_{n=1}^{\infty} v_{-n} z^{n-1}.
\end{align}
$S_{-}$ and $\tilde{S}_{-}$ are defined for $z\in \Omega^-$. In a
neighborhood of infinity, they have the series expansion
\begin{align}\label{eq8_4_4}
S_{-}(z) =-\frac{t_0}{z} -\sum_{n=1}^{\infty} v_n
z^{-n-1},\hspace{1.5cm} \tilde{S}_{-}
(z)=-\frac{t_0}{z}+\sum_{n=1}^{\infty} nt_{-n} z^{-n-1}.
\end{align}
 The theory of
complex analysis tells us that on the curve $\mC$,
\begin{align}\label{eq8_4_5}
(1/f)\circ G(z) = S_+(z)-S_{-}(z), \hspace{1.5cm} \frac{g \circ
F(z)}{z^2} = \tilde{S}_{+}(z)- \tilde{S}_{-}(z).
\end{align}

 To prove that $t_n$, $n\in\Z$ give a complete set of local coordinates on
 $\Homeo_{C}(S^1)$, we need  the analog of Lemma \ref{lemma1}:
\begin{lemma}\label{lemma3}
Given a one-parameter curve $\gamma_t=g_t^{-1}\circ f_t$ on
$\Homeo_{C}(S^1)$, we have
\begin{align*}
\frac{(\pa (g_t\circ f_t^{-1})/\pa t)\circ
f_t(z)}{(f_t(z))^2}f_t'(z) = \left(\frac{\pa}{\pa t}
\frac{1}{f_t\circ g_t^{-1}}\right)\circ g_t(z)
g_t'(z),\hspace{1cm}z\in S^1.
\end{align*}
\end{lemma}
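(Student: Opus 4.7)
The plan is to mirror the proof of Lemma \ref{lemma1} verbatim, replacing the welding $\gamma_t = g_t^{-1}\circ f_t$ by its dual $\tilde\gamma_t := g_t\circ f_t^{-1}$, whose inverse is $\tilde\gamma_t^{-1} = f_t\circ g_t^{-1}$. Since $f_t$ and $g_t$ extend to $C^1$ homeomorphisms of the plane, $\tilde\gamma_t$ and $\tilde\gamma_t^{-1}$ become $C^1$ maps globally defined on $\C$, so their $t$-derivatives are unambiguous. With this identification the two sides of the claim become exactly $(\pa_t\tilde\gamma_t)\circ f_t(z)\cdot f_t'(z)/f_t(z)^2$ and $\pa_t(1/\tilde\gamma_t^{-1})\circ g_t(z)\cdot g_t'(z)$, so the statement is literally the analogue of Lemma \ref{lemma1} for the pair $(\tilde\gamma_t,\tilde\gamma_t^{-1})$.

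Following that proof, I would differentiate the relation $\tilde\gamma_t\circ\tilde\gamma_t^{-1}=\id$ in $t$ and in $w$ and use the chain rule to obtain
\begin{align*}
\frac{\pa\tilde\gamma_t}{\pa t}\!\circ\tilde\gamma_t^{-1}(w) +\frac{\pa\tilde\gamma_t}{\pa w}\!\circ\tilde\gamma_t^{-1}(w)\cdot\frac{\pa\tilde\gamma_t^{-1}}{\pa t}(w)&=0,\\
\frac{\pa\tilde\gamma_t}{\pa w}\!\circ\tilde\gamma_t^{-1}(w)\cdot\frac{\pa\tilde\gamma_t^{-1}}{\pa w}(w)&=1.
\end{align*}
Eliminating $\pa_w\tilde\gamma_t\circ\tilde\gamma_t^{-1}$ between the two and converting to the reciprocal gives
\begin{align*}
\frac{\pa}{\pa t}\!\left(\frac{1}{\tilde\gamma_t^{-1}}\right)\!(w) = \frac{(\pa_t\tilde\gamma_t)(\tilde\gamma_t^{-1}(w))\cdot\pa_w\tilde\gamma_t^{-1}(w)}{(\tilde\gamma_t^{-1}(w))^2}.
\end{align*}
Setting $w=g_t(z)$ for $z\in S^1$, and using $\tilde\gamma_t^{-1}\circ g_t = f_t$ together with its $z$-derivative $\pa_w\tilde\gamma_t^{-1}(g_t(z)) = f_t'(z)/g_t'(z)$, the identity reduces after clearing $g_t'(z)$ to the asserted formula.

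No genuine obstacle arises. The only conceptual point worth flagging is that $\tilde\gamma_t$ is intrinsically a map of the time-dependent curve $\mC_t$, which would make $\pa_t\tilde\gamma_t$ appear ambiguous; this is resolved once and for all by interpreting $\tilde\gamma_t = g_t\circ f_t^{-1}$ via the plane-wide $C^1$ extensions of $f_t$ and $g_t$, after which the argument is purely formal.
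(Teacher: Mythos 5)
Your proposal is correct and is exactly the paper's intended argument: the paper proves Lemma \ref{lemma3} by simply declaring that "the proof is the same as in Lemma \ref{lemma1}," and your proof spells out precisely that transference, differentiating $\tilde\gamma_t\circ\tilde\gamma_t^{-1}=\id$ for $\tilde\gamma_t=g_t\circ f_t^{-1}$ and substituting $w=g_t(z)$. Your remark about resolving the ambiguity of $\pa_t\tilde\gamma_t$ via the plane-wide $C^1$ extensions of $f_t$ and $g_t$ is consistent with the paper's setup and is a sensible point to make explicit.
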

The proof is the same as in Lemma \ref{lemma1}.

\begin{proposition}
If $\gamma_t$, $t\in (-\vep, \vep) \subset\R$ is a curve on
$\Homeo_C(S^1)$ such that $d t_n/d t=0$ for all $t$, then
$\gamma_t=\gamma_0$ for all $t$.
\end{proposition}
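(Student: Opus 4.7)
The strategy parallels the proof of Proposition \ref{Prop1}, now using the series expansions \eqref{eq8_4_3}--\eqref{eq8_4_4} of $S_\pm^t$ and $\tilde S_\pm^t$ in place of Fourier expansions, and Lemma \ref{lemma3} in place of Lemma \ref{lemma1}.

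Suppose $dt_n/dt \equiv 0$ for all $n\in \Z$. The Taylor coefficients of $S_+^t$ near the origin are $nt_n(t)$, so its $t$-derivative vanishes as a germ at $0$; by analytic continuation within $\Omega_0^+$, $\partial S_+^t/\partial t|_{t=0} \equiv 0$ on $\Omega_0^+$. Symmetrically, the Laurent expansion of $\tilde S_-^t$ at $\infty$ involves only $t_0$ and $t_{-n}$, so $\partial \tilde S_-^t/\partial t|_{t=0} \equiv 0$ on $\Omega_0^-$. (The partial derivatives $\partial S_-^t/\partial t$ and $\partial \tilde S_+^t/\partial t$ are not directly controlled, as they involve the $v_n$.)

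Next I would differentiate the two boundary identities \eqref{eq8_4_5} after pulling back to $S^1$: evaluate $(1/f_t)\circ G_t = S_+^t - S_-^t$ at $\zeta = g_t(w)$ and $(g_t\circ F_t)(\zeta)/\zeta^2 = \tilde S_+^t - \tilde S_-^t$ at $\zeta = f_t(w)$, with $w\in S^1$. Using the vanishing in the previous step and the chain rule, one obtains (writing $\dot{}$ for $\partial/\partial t|_{t=0}$)
\begin{align*}
-\dot S_-^0(g_0(w))\,g_0'(w) \;=\; \dot{\tilde S}_+^0(f_0(w))\,f_0'(w) \;=\; \frac{f_0'(w)\dot g_0(w)-g_0'(w)\dot f_0(w)}{f_0(w)^2}, \qquad w\in S^1,
\end{align*}
the common value being exactly the expression supplied by Lemma \ref{lemma3}. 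Therefore the two holomorphic functions $H_+(z) = \dot{\tilde S}_+^0(f_0(z))f_0'(z)$ on $\mathbb{D}$ and $H_-(z) = -\dot S_-^0(g_0(z))g_0'(z)$ on $\mathbb{D}^*$ have equal boundary values on $S^1$, and since $\dot t_0=0$ kills the leading $1/\eta$ term of $\dot S_-^0(\eta)$ at $\infty$, $H_-(z) = O(1/z^2)$ as $z\to\infty$. They glue to a holomorphic function on $\hat\C$ vanishing at $\infty$, hence identically zero.

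It follows that $f_0'\,\dot g_0 = g_0'\,\dot f_0$ on $S^1$, so $\dot f_0/f_0'$ on $\mathbb{D}$ and $\dot g_0/g_0'$ on $\mathbb{D}^*$ glue to a function on $\hat\C$ that is holomorphic away from $\infty$, where it has at most a simple pole (since $\dot g_0\sim \dot b\,z$ and $g_0'\to b$). Together with $\dot f_0(0)=0$ (from the normalization $f_t(0)=0$), this forces the glued function to be $\alpha z$ with $\alpha = \dot b/b = \dot a_1/a_1$. Differentiating the normalization $a_1 b=1$ yields $\dot a_1/a_1 + \dot b/b = 0$, i.e.\ $2\alpha = 0$, hence $\alpha=0$, and thus $\dot f_0\equiv 0$ and $\dot g_0\equiv 0$. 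Applying this at every $t$ gives $\gamma_t\equiv\gamma_0$. The main technical point is the careful handling of the moving boundary $\mathcal{C}_t$ when differentiating in $t$; one can either pull back to $S^1$ using $f_t,g_t$ as above, or apply the calculus formula \eqref{calculus} to the Cauchy integrals \eqref{eq8_5_4}.
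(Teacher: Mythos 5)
Your proposal is correct and follows essentially the same route as the paper: deduce $\dot S_+\equiv 0$, $\dot{\tilde S}_-\equiv 0$ and $\dot S_-=O(z^{-2})$ from $\dot t_n=0$, glue the two pulled-back holomorphic pieces via Lemma \ref{lemma3} into a function on $\hat\C$ vanishing at $\infty$, conclude $\dot f/f'=\dot g/g'$ on $S^1$, and kill the remaining degree-one freedom using $\dot f(0)=0$ and the normalization $a_1b=1$. The only cosmetic difference is that the paper finishes by comparing Fourier coefficients of $\dot g/g'$ and $\dot f/f'$ directly on $S^1$ rather than by a second Liouville-type gluing, which is the same computation.
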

\begin{proof}

If $\gamma_t$ is a one parameter family in $ \Homeo_C(S^1)$ such
that $dt_n/dt=0$ for all $n$, then \eqref{eq8_4_3} and \eqref{eq8_4_4} give
\begin{align*}
\frac{d S_{+}}{dt}(z)=0,\hspace{2cm} \frac{d
\tilde{S}_{-}}{dt}(z)=0;
\end{align*}
and
$$\frac{dS_-}{dt}=O(z^{-2})\hspace{1cm}\text{as}\;z\rightarrow \infty.$$ Hence we conclude from \eqref{eq8_4_5} that
$d((1/f)\circ G)/dt$ is the boundary value of the holomorphic
function $d S_{-}/dt$ on $\Omega^-$, which vanish at $\infty$.
Since $g$ is holomorphic on $\mathbb{D}^*$,
\begin{align*}
\frac{d}{dt}\left(\frac{1}{f\circ G}\right)\circ g(z)
g'(z),\hspace{1cm} z\in S^1\end{align*} is the boundary value of a
holomorphic function on $\mathbb{D}^*$. Similarly, we conclude
that
\begin{align*}
\frac{\left(d(g\circ F)/dt\right)\circ f(z)}{(f(z))^2}f'(z),
\hspace{1cm} z\in S^1
\end{align*}
is the boundary value of a holomorphic function on $\mathbb{D}$.
Lemma \ref{lemma3} then implies that we have a holomorphic
function on $\hat{\C}$ which vanishes at $\infty$. This function
must vanish identically. Therefore $d(g\circ F)/dt=0$ and working
this formula out explicitly, we have
\begin{align*}
\frac{1}{g'(z)}\frac{dg(z)}{dt}=\frac{1}{f'(z)}\frac{df(z)}{dt}.
\end{align*}
However, restricted to $S^1$,
\begin{align*}
\frac{1}{g'(z)}\frac{dg(z)}{dt}=&\frac{d\log b}{dt} z+ \text{lower
order terms in $z$.}\\
\frac{1}{f'(z)}\frac{df(z)}{dt}=&\frac{d\log a_1}{dt}
z+\text{higher order terms in $z$}.
\end{align*}
Comparing coefficients and using the fact that $a_1=b^{-1}$, we
conclude that $dg/dt=df/dt=0$, and consequently $d\gamma_t/dt=0$.
\end{proof}

\subsection{The variations of the functions $v_m$ with respect to
$t_n$.}  We still denote by $b_{m,n}$ the generalized Grunsky
coefficients of $(f,g)$, and $P_n, Q_n$ their Faber polynomials.
We use $\kappa_{m,n}$, $\mathcal{P}_n, \mathcal{Q}_n$ to
denote the generalized Grunsky coefficients and Faber polynomials
of $(F=f^{-1}, G=g^{-1})$.

\begin{proposition}\label{Prop4}
 For $m \neq 0$, the variation of the function
$v_m$ with respect to the variables $t_n$ is given by the following:
\begin{align*}
\frac{\pa v_m}{\pa t_n}=-|mn|\kappa_{n,m}, \;\; n\neq 0,
\hspace{1cm} \frac{\pa v_m}{\pa t_0}= |m| \kappa_{0,m}.
\end{align*}
\end{proposition}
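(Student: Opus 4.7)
The plan is to mimic the architecture of the proof of Proposition \ref{Prop2}, with $S^1$ replaced by $\mathcal{C}$ and the Fourier series of $\gamma$ and $1/\gamma^{-1}$ replaced by the Cauchy integrals $S_{\pm}, \tilde{S}_{\pm}$ of \eqref{eq8_5_4}. For each $n$, I would differentiate the decompositions $(1/f)\circ G = S_+ - S_-$ and $g\circ F/z^2 = \tilde{S}_+ - \tilde{S}_-$ on $\mathcal{C}$ with respect to $t_n$; pull the two resulting identities back to $S^1$ along $f$ and $g$ respectively; glue the pullbacks using Lemma \ref{lemma3}; and identify the resulting rational function on $\hat{\C}$ with a Faber polynomial of the pair $(F, G)$.

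Concretely, for $n \geq 1$ the series expansions \eqref{eq8_4_3}--\eqref{eq8_4_4} show that $A(w) := \pa_{t_n}(g\circ F)(w)/w^2$ extends holomorphically to $\Omega^+$ as $-\sum_{m\geq 1}(\pa v_{-m}/\pa t_n)\,w^{m-1}$, while $B(w) := \pa_{t_n}[(1/f)\circ G](w)$ equals $nw^{n-1}$ plus a function holomorphic on $\Omega^-$ with a double zero at $\infty$. Set $R_n(z)$ on $\Del$ to be the antiderivative $\int_0^z A(f(\zeta))f'(\zeta)\,d\zeta$, and on $\Del^*$ to be $g(z)^n + \int^z[B(g(\zeta))g'(\zeta)-(g(\zeta)^n)']\,d\zeta$ (the latter integrand has zero residue, so the antiderivative is single-valued on $\Del^*$). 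Lemma \ref{lemma3} guarantees $A\circ f\cdot f' = B\circ g\cdot g'$ on $S^1$, so once the integration constants are matched, $R_n$ extends to a single function on $\hat{\C}$ which is holomorphic on $\Del$ and meromorphic on $\Del^*\cup\{\infty\}$ with sole pole of order $n$ at infinity. Hence $R_n$ is a polynomial of degree $n$, and its asymptotics at $\infty$ force $R_n = \mathcal{P}_n + \mathrm{const}$, where $\mathcal{P}_n$ is the $n$-th Faber polynomial of $G$ in the pair $(F,G)$.

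Matching the two representations of $R_n$ against the Faber expansions of \eqref{iden1} adapted to $(F,G)$---namely $\mathcal{P}_n(G(\zeta))=\zeta^n + n\sum_m \kappa_{n,m}\zeta^{-m}$ on the $\Del^*$ side and $\mathcal{P}_n(F(\zeta)) = n\kappa_{n,0} + n\sum_m \kappa_{n,-m}\zeta^m$ on the $\Del$ side---and comparing coefficients of $g(z)^{-m}$ and $f(z)^m$ respectively yields $\pa v_m/\pa t_n = -nm\kappa_{n,m}$ and $\pa v_{-m}/\pa t_n = -nm\kappa_{n,-m}$; symmetry of $\kappa$ then gives the stated formula. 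The case $n \leq -1$ proceeds identically, except that the glued rational function now has a single pole of order $|n|+1$ at the origin and a double zero at $\infty$, matching the Faber polynomial $\mathcal{Q}_{|n|}$ of $F$ after integration. For $n=0$ the pole/zero structure of the glued rational function forces it to be $1/z$; integrating produces $\log z$, and comparing to the logarithmic expansions of $\log(G(\zeta)/\zeta)$ and $\log(F(\zeta)/\zeta)$ in \eqref{iden1} extracts $\pa v_m/\pa t_0 = |m|\kappa_{0,m}$.

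The main technical obstacle throughout is bookkeeping: verifying that the pole and zero structure at $0$ and $\infty$ of the glued function emerges exactly as claimed in each regime $n>0$, $n<0$, and $n=0$, and tracking integration constants so that the Faber polynomial (or logarithm) identification is pinned down up to a single additive constant. Once this is in place, uniqueness of the rational identification combined with the series comparison to \eqref{iden1} forces the claimed relation between the $v$-derivatives and $\kappa_{n,m}$.
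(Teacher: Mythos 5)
Your plan is correct and yields the proposition, but the engine of your argument is genuinely different from the one used in the paper. The paper does \emph{not} glue the two pullbacks into a rational function on $\hat{\C}$. Instead it writes the common boundary value of Lemma \ref{lemma3} as a Fourier series $\sum_{m\in\Z}\alpha_{n,m}w^m$ on $S^1$, substitutes this into the Cauchy integrals defining $\pa S_{\pm}/\pa t_n$ and $\pa\tilde{S}_{\pm}/\pa t_n$, and uses the kernel expansions \eqref{iden4} to express those derivatives as series in the Faber polynomials $P_m'$, $Q_m'$ of the pair $(f,g)$; the vanishing of the unwanted coefficients ($\alpha_{n,m}=0$ for $m\le -1$, and for $m\ge n$) is then deduced from Lemma \ref{ess}, i.e.\ from the invertibility of the Grunsky-type matrices $\mathfrak{B},\mathfrak{C}$ imported from \cite{LT2}. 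Your route transplants the Liouville-type gluing of Proposition \ref{Prop2} to the $\mathbb{D}/\mathbb{D}^*$ side: the pullback of $\pa_{t_n}\tilde{S}_+$ by $f$ is holomorphic on $\mathbb{D}$, the pullback of $nz^{n-1}-\pa_{t_n}S_-$ by $g$ is meromorphic on $\mathbb{D}^*$ with its only pole at $\infty$, and Lemma \ref{lemma3} matches them on $S^1$, so the glued object is a polynomial of degree $n-1$; this bypasses Lemma \ref{ess} entirely and is in that sense more self-contained. What it costs you is the analytic hypothesis behind the Painlev\'e/Morera continuation across $S^1$: you need the boundary values of $\pa_{t_n}\tilde{S}_+$ and $\pa_{t_n}S_-$ to be attained continuously up to $\mathcal{C}$ (true here for $C^1$ data, and used tacitly elsewhere in the paper, e.g.\ in the coordinate proposition of Section 5.1), whereas the paper's version stays with integral identities over the fixed circle $S^1$ and series expansions inside open domains. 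Two small points to tidy up if you write this out: the series $-\sum_{m\ge1}(\pa v_{-m}/\pa t_n)w^{m-1}$ represents the extension of $A$ only near the origin (the extension on all of $\Omega^+$ is $\pa_{t_n}\tilde{S}_+$ itself), which is harmless since your coefficient comparison happens there; and "differentiating the decompositions on $\mathcal{C}$'' should be justified by first changing variables in the Cauchy integrals to the fixed contour $S^1$ before differentiating in $t_n$, exactly as the paper carries out explicitly in the proof of the proposition in Section 7.
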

\begin{proof}
First we consider the variation with respect to $t_n$, $n\geq 1$.
For $w\in S^1$, let
\begin{align}\label{iden6}
\frac{(\pa(g\circ F)/\pa t_n)\circ
f(w)f'(w)}{(f(w))^2}=\frac{\pa}{\pa t_n}\left(\frac{1}{f\circ
G}\right)\circ g(w) g'(w)=\sum_{m\in \Z} \alpha_{n,m} w^m.
\end{align}
From \eqref{eq8_4_4}, we
conclude that $\pa \tilde{S}_{-}/\pa t_n=0$. On the other hand,
using the definition of $\tilde{S}_{-}$, we have on $\Omega^-$
\begin{align*}
0=\frac{\pa \tilde{S}_{-}}{\pa t_n}=\frac{1}{2\pi
i}\oint_{\mC}\frac{\pa (g\circ F)(w)/\pa
t_n}{w^2(w-z)}dw=\frac{1}{2\pi i}\oint_{S^1} \frac{(\pa (g\circ
F)/\pa t_n)\circ f(w)f'(w)}{(f(w))^2(f(w)-z)}dw.
\end{align*}
Using the expansion of $1/(f(w)-z)$ given by \eqref{iden4} to
compute this integral, we get
\begin{align*}
0= -\alpha_{n,-1}z^{-1}+
\sum_{m=1}^{\infty}\frac{\alpha_{n,-m-1}}{m} Q_m'(z)
\end{align*}
for $z$ in a neighborhood of $\infty$. Since $$Q_m'(z)=
O(z^{-2}), \hspace{1cm}\text{as}\;z\rightarrow\infty,$$ we have
$\alpha_{n,-1}=0$. It follows from Lemma \ref{ess} that
$\alpha_{n,m}=0$ for all $m\leq -1$.
For $z \in \Omega^+$, we have by \eqref{eq8_4_3}:
\begin{align*}
nz^{n-1}=\frac{\pa S_+}{\pa t_n}(z)  =\frac{1}{2\pi
i}\oint_{S^1}\frac{(\pa((1/(f\circ G))/\pa t_n)\circ g(w)
g'(w)}{g(w)-z} dw.
\end{align*}
Using the series expansion of $1/(g(w)-z)$ given by \eqref{iden4},
we have for $z$ in a neighborhood of the origin,
\begin{align}\label{iden5}
nz^{n-1} =\sum_{m=1}^{\infty}\frac{\alpha_{n,m-1}}{m}P_m'(z).
\end{align}
On the other hand, using \eqref{iden4} and \eqref{iden6}, we have
for $\zeta\in \mathbb{D}^*$,
\begin{align}\label{eq8_5_1}
\frac{\pa S_{-}}{\pa t_n}\circ g(\zeta)g'(\zeta)=&\frac{-1}{2\pi
i}\oint_{S^1}\frac{\pa}{\pa t_n}\left(\frac{1}{f\circ
g^{-1}}\right)\circ g(w) g'(w)\frac{g'(\zeta)}{g(\zeta)-g(w)} dw\\
=&-\sum_{m=1}^{\infty} \alpha_{n,m-1} \sum_{k=1}^{\infty} kb_{km}
\zeta^{-k-1}=\sum_{m=1}^{\infty} \frac{\alpha_{n,m-1}}{m}
\frac{\pa}{\pa
\zeta}\Bigl(P_m(g(\zeta))-\zeta^m\Bigr)\nonumber\\
=&\frac{\pa}{\pa \zeta}\left( g(\zeta)^n-\sum_{m=1}^{\infty}
\frac{\alpha_{n,m-1}}{m}\zeta^{m}\right).\nonumber
\end{align}
The last equality follows from \eqref{iden5}. Since the power
series expansion at infinity of the left hand side only have
negative power terms, we conclude that
\begin{align}\label{eq8_5_2}
\sum_{m=1}^{\infty}\frac{\alpha_{n,m-1}}{m}\zeta^{m} +c_1=
(g(\zeta))^n_{\geq 0}=\mathcal{P}_n(\zeta) ,
\end{align}
where $c_1$ is an integration constant and $\mathcal{P}_n$ is the
$n$-th Faber polynomial of $G$. This implies that $\alpha_{n,m}=0$
for $m\geq n$.
Putting
$\zeta=G(z)$ in \eqref{eq8_5_1}, integrating, comparing the series expansion at
$\infty$ using \eqref{eq8_4_4} and \eqref{eq8_5_2}, we have
\begin{align*}
\sum_{m=1}^{\infty} \frac{1}{m}\frac{\pa v_m}{\pa t_n} z^{-m}
=z^n- \mathcal{P}_n(G(z))+c_1.
\end{align*}
 From this
 we conclude that for $m,n\geq 1$,
 \begin{align*}
 \frac{\pa v_m}{\pa t_n} = -mn\kappa_{n,m}.
 \end{align*}
Next, consider the holomorphic function on $\mathbb{D}$
 \begin{align*}
\frac{\pa \tilde{S}_+}{\pa t_{n}}\circ f(\zeta)
f'(\zeta)=\frac{-1}{2\pi i}\oint_{S^1}\frac{(\pa (g\circ
f^{-1})/\pa t_n)\circ
f(w)f'(w)}{(f(w))^2}\frac{f'(\zeta)}{f(\zeta)-f(w)} dw.
 \end{align*}
Using \eqref{iden4}, \eqref{iden6} and \eqref{eq8_5_2}, we have
\begin{align*}
\frac{\pa \tilde{S}_+}{\pa t_{n}}\circ f(\zeta)
f'(\zeta)=\sum_{m=0}^{n-1} \alpha_{n,m} \zeta^m
=\mathcal{P}_n'(\zeta).
\end{align*}
Putting $\zeta=F(z)$, integrating, comparing the series
expansion  in a neighborhood of the origin and using \eqref{eq8_4_3}, we find that
\begin{align*}
-\sum_{m=1}^{\infty}\frac{1}{m} \frac{\pa v_{-m}}{\pa t_n} z^m
=\mathcal{P}_{n} (f^{-1}(z))+c_2.
\end{align*}
This implies for $m,n \geq 1$
\[
\frac{\pa v_{-m}}{\pa t_n} = -mn \kappa_{n,-m}.
\]

The variation with respect to $t_{-n}$, $n\geq 1$ follows the same
idea. Let
\begin{align}\label{iden7}
\frac{(\pa (g\circ f^{-1})/\pa t_{-n})\circ
f(w)f'(w)}{(f(w))^2}=\frac{\pa}{\pa t_{-n}}\left(\frac{1}{f\circ
g^{-1}}\right)\circ g(w) g'(w)=\sum_{m\in \Z} \beta_{n,m} w^{m}
\end{align}
on $S^1$. Since $\pa S_+/\pa t_{-n}=0$,
\begin{align*}
\sum_{m=1}^{\infty} \frac{\beta_{n,m-1}}{m} P_{m}'(z)=0.
\end{align*}
By Lemma \ref{ess}, this implies $\beta_{n,m}=0$ for $m\geq 0$.
Next since $\pa \tilde{S}_{-}(z)/\pa t_{-n}=nz^{-n-1}$, we have
\begin{align*}
-\beta_{n,-1}z^{-1} +\sum_{m=1}^{\infty}
\frac{\beta_{n,-m-1}}{m}Q_{m}'(z)=nz^{-n-1}
\end{align*}
on a neighborhood of $\infty$. This implies that $\beta_{n,-1}=0$.
Now for $\zeta\in \mathbb{D}$, we have
\begin{align}\label{eq8_5_3}
&\frac{\pa \tilde{S}_+}{\pa t_{-n}}\circ f(\zeta)
f'(\zeta)=\sum_{m=1}^{\infty} \beta_{n,-m-1} \sum_{k=1}^{\infty}
kb_{-k, -m} \zeta^{k-1}\\
=&\sum_{m=1}^{n} \frac{\beta_{n,-m-1}}{m} \frac{\pa}{\pa
\zeta}\left( Q_m(f(\zeta))-\zeta^{-m}\right)=\frac{\pa}{\pa
\zeta}\left( -(f(\zeta))^{-n}
-\sum_{m=1}^{\infty}\frac{\beta_{n,-m-1}}{m}\zeta^{-m}\right).\nonumber
\end{align}
Since the left hand side does not contain negative powers of
$\zeta$, we conclude that $\beta_{n,m}=0$ for $m\leq -n-2$ and for
some constant $c_3$,
\begin{align*}
\sum_{m=1}^{n}\frac{\beta_{n,-m-1}}{m}\zeta^{-m}+c_3=-(f(\zeta))^{-n}_{\leq
0}=-\mathcal{Q}_n(\zeta),
\end{align*}
where $\mathcal{Q}_n$ is the $n$-th Faber polynomial of $F$. Together with \eqref{eq8_5_3}, this
implies that  in a neighborhood of the origin, we
have
\begin{align*}
-\sum_{m=1}^{\infty} \frac{1}{m}\frac{\pa v_{-m}}{\pa t_{-n}}
z^{m} = \mathcal{Q}_n(F(z)) - z^{-n}+c_3.
\end{align*}
Hence we conclude that for $m,n\geq 1$,
\[
\frac{\pa v_{-m}}{\pa t_{-n}}=-mn \kappa_{-n,-m}.
\]
Finally, for $\zeta\in \mathbb{D}^*$, we have
\begin{align*}
\frac{\pa S_{-}}{\pa t_{-n}}\circ
g(\zeta)g'(\zeta)=-\sum_{m=1}^{n} \beta_{n,-m-1}
\zeta^{-m-1}=-\mathcal{Q}_n'(\zeta).
\end{align*}
Hence  in a neighborhood of $\infty$, we have
\begin{align*}
\sum_{m=1}^{\infty}\frac{1}{m} \frac{\pa v_{m}}{\pa t_{-n}} z^{-m}
= -\mathcal{Q}_n (G(z))+c_4.
\end{align*}
This implies that for $m,n\geq 1$
\begin{align*}
\frac{\pa v_m}{\pa t_{-n}}= -mn \kappa_{-n,m}.
\end{align*}
For variation with respect to $t_0$, let
\begin{align}
\frac{(\pa (g\circ F)/\pa t_{0})\circ
f(w)f'(w)}{(f(w))^2}=\frac{\pa}{\pa t_{0}}\left(\frac{1}{f\circ
G}\right)\circ g(w) g'(w)=\sum_{m\in Z} \chi_m w^m
\end{align}
on $S^1$. Since $\pa S_+(z)/\pa t_0=0$, $\chi_m=0$ for all $m\geq
0$. On the other hand, since $\pa \tilde{S}_{-}/\pa t_0 = -1/z$,
this implies that $$
-\chi_{-1}z^{-1}+\sum_{m=1}^{\infty}\frac{\chi_{-m-1}}{m}Q_m'(z)=-z^{-1}$$
in a neighborhood of infinity. Consequently, $\chi_{-1}=1$ and by
Lemma \ref{ess}, $\chi_m=0$ for all $m\leq -2$. Now using the
  formula \eqref{eq8_5_4} for $S_-$, we find that
\begin{align*}
\frac{\pa S_{-}}{\pa t_{0}}\circ
g(\zeta)g'(\zeta)=-\frac{1}{\zeta},\hspace{1cm}\zeta\in
\mathbb{D}^*.
\end{align*}
Together with \eqref{eq8_4_4}, this implies that in a neighborhood of $\infty$,
we have
\begin{align*}
\sum_{m=1}^{\infty}\frac{1}{m} \frac{\pa v_m}{\pa t_0} z^{-m}
=-\log \frac{G(z)}{z}-\log b.
\end{align*}
Eq. \eqref{iden1} then shows that for $m>0$,
\[
\frac{\pa v_m}{\pa t_0}= m\kappa_{0,m}.
\]
Finally, we have for $\zeta\in \mathbb{D}$,
\begin{align*}
\frac{\pa \tilde{S}_+}{\pa t_0} \circ f(\zeta) f'(\zeta) =
\sum_{k=1}^{\infty} kb_{-k,0} \zeta^{k-1} =-\frac{\pa }{\pa \zeta}
\left(\log \frac{f(\zeta)}{\zeta}\right).
\end{align*}
Together with \eqref{eq8_4_3}, this implies that  in a neighborhood of the origin
\begin{align*}
-\sum_{m=1}^{\infty}\frac{1}{m}\frac{\pa v_{-m}}{\pa t_0} z^m =
\log \frac{F(z)}{z},
\end{align*}
which gives us
\begin{align*}
\frac{\pa v_{-m}}{\pa t_0}= m\kappa_{0,-m}
\end{align*}
for $m>0$. This concludes the proof.

\end{proof}

We gather some facts we  need later from the proof above in
the following corollary.
\begin{corollary}\label{cor1}
We have the following variational formula for the functions
$S_{\pm}$ and $\tilde{S}_{\pm}$. For $n\geq 1$,
\begin{align*}
\frac{\pa S_{+}}{\pa t_n}(z) &=nz^{n-1}, \hspace{3.8cm} \frac{\pa
S_{+}}{\pa t_{-n}} =0,\\
\frac{\pa S_{-}}{\pa t_n}(z) &=-\frac{\pa}{\pa
z}\left(\mathcal{P}_n(G(z))-z^n\right),\hspace{1cm} \frac{\pa
S_{-}}{\pa t_{-n}}(z) =-\frac{\pa}{\pa
z}\left(\mathcal{Q}_n(G(z))\right)\\
\frac{\pa \tilde{S}_+}{\pa t_n}(z)&=\frac{\pa}{\pa
z}\left(\mathcal{P}_n(F(z))\right),\hspace{2.2cm} \frac{\pa
\tilde{S}_+}{\pa t_{-n}}(z)=\frac{\pa}{\pa
z}\left(\mathcal{Q}_n(F(z))-z^{-n}\right)\\
\frac{\pa \tilde{S}_{-}}{\pa t_n}(z)&= 0,\hspace{4.5cm} \frac{\pa
\tilde{S}_{-}}{\pa t_{-n}}(z)= nz^{-n-1}
\end{align*}
For $n=0$,
\begin{align*}
\frac{\pa S_{+}}{\pa t_0}(z)&=0, \hspace{4.8cm}\frac{\pa
S_{-}}{\pa
t_0}(z)=-\frac{G'(z)}{G(z)}, \\
\frac{\pa \tilde{S}_{+}}{\pa t_0}(z)&=\frac{\pa }{\pa z}\log
\frac{F(z)}{z}, \hspace{3cm}\frac{\pa \tilde{S}_{-}}{\pa
t_0}(z)=-\frac{1}{z}.
\end{align*}
\end{corollary}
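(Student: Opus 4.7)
The plan is to exploit the machinery already developed in the proof of Proposition \ref{Prop4}, since the corollary is essentially a repackaging of intermediate output produced there. First I would dispose of the ``trivial'' half of the formulas directly from the explicit series representations \eqref{eq8_4_3} and \eqref{eq8_4_4}. Differentiating $S_+(z)=\sum_{n\ge 1} n t_n z^{n-1}$ term by term gives $\partial S_+/\partial t_n = n z^{n-1}$ for $n\ge 1$ and $\partial S_+/\partial t_0 = \partial S_+/\partial t_{-n} = 0$; similarly $\tilde S_-(z)=-t_0/z+\sum_{n\ge 1} n t_{-n} z^{-n-1}$ gives at once $\partial \tilde S_-/\partial t_{-n}=nz^{-n-1}$, $\partial \tilde S_-/\partial t_0=-1/z$ and $\partial \tilde S_-/\partial t_n=0$ for $n\ge 1$. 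This immediately yields six of the twelve formulas.

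For the remaining formulas, I would start from the Cauchy integral definitions \eqref{eq8_5_4} of $S_-$ and $\tilde S_+$, pull them back from $\mC$ to integrals over $S^1$ via the substitutions $w\mapsto g(w)$ and $w\mapsto f(w)$, and differentiate under the integral sign. At that point the boundary expressions $(\partial(g\circ F)/\partial t_n)\circ f\cdot f'/f^2$ and $\partial(1/(f\circ G))/\partial t_n\circ g\cdot g'$ appear, and Lemma \ref{lemma3} asserts they agree on $S^1$. Inserting the expansions of $g'(\zeta)/(g(\zeta)-g(w))$ and $f'(\zeta)/(f(\zeta)-f(w))$ from \eqref{iden4} converts each Cauchy integral into a power series in $\zeta$ whose coefficients involve the Grunsky $b_{m,n}$ of $(f,g)$. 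This is precisely the calculation carried out to derive \eqref{eq8_5_1}, \eqref{eq8_5_2} and \eqref{eq8_5_3}, and matching the resulting expansions with \eqref{iden1} recognizes them as derivatives of $\mathcal{P}_n(G(z))-z^n$, $\mathcal{P}_n(F(z))$, $\mathcal{Q}_n(G(z))$, $\mathcal{Q}_n(F(z))-z^{-n}$ accordingly.

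The $t_0$ case is handled in the same spirit, using the $\chi_m$ coefficients introduced at the end of the Proposition \ref{Prop4} proof; the difference is that the logarithmic identities in \eqref{iden1} produce $\partial\tilde S_+/\partial t_0 = \partial_z\log(F(z)/z)$ and $\partial S_-/\partial t_0 = -G'(z)/G(z)$, the additive constants $\log a_1$, $\log b$ dropping out upon differentiation in $z$. The invertibility result in Lemma \ref{ess}, invoked as in the proof of Proposition \ref{Prop4}, is what pins the series down and forbids spurious extra terms.

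The main obstacle is bookkeeping rather than any new idea: one has to keep straight which Faber polynomial ($\mathcal P_n$ vs $\mathcal Q_n$, of $G$ vs $F$) appears on which side of $\mC$, and identify the integration constants correctly when passing from the derivative identities to closed-form expressions. Since Proposition \ref{Prop4}'s proof has already done this matching in the course of extracting $\partial v_m/\partial t_n$, the corollary really amounts to recording the boundary-value expressions that were formed along the way, before they were further integrated and combined with \eqref{iden1}.
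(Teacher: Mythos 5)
Your proposal is correct and matches the paper exactly: the paper gives no separate proof of Corollary \ref{cor1}, stating only that it ``gathers some facts from the proof above,'' and the facts in question are precisely the ones you identify — the trivial derivatives of the explicit expansions \eqref{eq8_4_3}--\eqref{eq8_4_4} for $S_+$ and $\tilde S_-$, and the identities \eqref{eq8_5_1}, \eqref{eq8_5_2}, \eqref{eq8_5_3} and their $t_{-n}$, $t_0$ analogues, rewritten via the substitutions $\zeta=G(z)$, $\zeta=F(z)$. Your handling of the $t_0$ case and the remark that the additive constants drop out under $\partial_z$ are likewise consistent with the paper's computation.
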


Now we consider the variation of $v_0$.
\begin{proposition}\label{Prop5}
The variation of $v_0$ with respect to  $t_n$, $n\in\Z$ is given
by
\begin{align*}
\frac{\pa v_0}{\pa t_n}=|n| \kappa_{n,0},
\hspace{0.5cm} m\neq 0, \hspace{1cm}\frac{\pa v_0}{\pa t_0} =2\log
b=-2\log a_1.
\end{align*}
\end{proposition}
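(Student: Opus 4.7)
The plan is to mirror the proof of Proposition \ref{Prop3}. The definition of $v_0$ consists of two integrals over the moving curve $\mathcal{C}$ plus one integral over the fixed circle $S^1$, so I differentiate term by term: the calculus formula \eqref{calculus} applies to the $\mathcal{C}$ integrals, and straightforward differentiation under the integral sign applies to the $S^1$ integral.

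First I would establish an analogue of the vanishing identity from Proposition \ref{Prop3}: a combination of $(\pa/\pa t)\log F_t$, $(\pa/\pa t)\log G_t$, and $(\pa/\pa t)(g_t/f_t)$ that vanishes identically on $S^1$ (equivalently on $\mathcal{C}_t$ after composition). This identity should follow by differentiating $F_t \circ f_t = \id$ on $\mathbb{D}$ and $G_t \circ g_t = \id$ on $\mathbb{D}^*$, and then combining with Lemma \ref{lemma3}. Its role is to absorb the $\pa/\pa \bar{z}$ boundary-deformation contribution in \eqref{calculus}, which is nontrivial here because the integrands $\log(F(z)/z)\cdot (g\circ F)(z)/z^2$ and $\log(G(z)/z)\cdot 1/(f\circ G)(z)$ are only $C^1$, not holomorphic, across $\mathcal{C}$.

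Once the moving-contour bookkeeping is handled, $\pa v_0/\pa t_n$ reduces to an explicit integral whose integrand is expressible through $\pa S_\pm/\pa t_n$ and $\pa \tilde{S}_\pm/\pa t_n$. I would then invoke Corollary \ref{cor1}, which gives these derivatives in closed form: in terms of the Faber polynomials $\mathcal{P}_n, \mathcal{Q}_n$ of the pair $(F,G)$ when $n\neq 0$, and in terms of $\log(F(z)/z)$, $\log(G(z)/z)$ when $n=0$. Substituting the series expansions \eqref{iden1} for the generalized Grunsky coefficients $\kappa_{m,n}$ of $(F,G)$ and comparing constant terms at $z=0$ (via $\zeta=F(z)$) and at $z=\infty$ (via $\zeta=G(z)$) extracts $|n|\kappa_{n,0}$ for $n\neq 0$. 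For $n=0$, the leading constants of $\log(F(z)/z)$ near the origin and $\log(G(z)/z)$ near infinity are $\log(1/a_1)=-\log a_1$ and $\log(1/b)=\log a_1$ respectively (using $a_1b=1$), and they combine with opposite signs coming from the two $\mathcal{C}$-integrals to give $-2\log a_1 = 2\log b$.

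The main obstacle is the careful bookkeeping of \eqref{calculus}: formulating the precise vanishing identity and verifying that it exactly cancels the non-holomorphic boundary contributions, so that the remaining computation is a clean evaluation using Corollary \ref{cor1}. A useful sanity check is that the claimed value $|n|\kappa_{n,0}$ together with the Grunsky symmetry $\kappa_{n,0}=\kappa_{0,n}$ matches what Proposition \ref{Prop4} gave for $\pa v_n/\pa t_0$, which is exactly the compatibility required for a potential $\F$ satisfying $\pa\F/\pa t_n=v_n$ to exist.
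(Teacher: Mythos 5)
Your proposal is correct and follows essentially the same route as the paper: reduce $\pa v_0/\pa t_n$ to the two $\mathcal{C}$-integrals via the moving-contour formula \eqref{calculus} and the analogue of the vanishing identity from Proposition \ref{Prop3}, rewrite the integrands through $\pa S_{\pm}/\pa t_n$ and $\pa\tilde{S}_{\pm}/\pa t_n$ using \eqref{eq8_4_5} and Corollary \ref{cor1}, and then evaluate by Cauchy's formula and the expansions \eqref{iden1} for the pair $(F,G)$. The $n=0$ bookkeeping and the symmetry check against Proposition \ref{Prop4} are both consistent with the paper's computation.
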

\begin{proof}
The proof parallels the proof of Proposition \ref{Prop3}. We
have,
\begin{align*}
\frac{\pa v_0}{\pa t_n}=&\frac{1}{2\pi i}\oint_{\mC} \left( \log
\frac{F(z)}{z}\right)\frac{\pa g\circ F(z)}{\pa
t_n}z^{-2}dz-\left(\log\frac{G(z)}{z}\right)\left(
\frac{\pa}{\pa t_n}\frac{1}{f\circ G} \right)(z) dz\\
=&\frac{1}{2\pi i}\oint_{\mC} \left( \log \frac{F(z)}{z}\right)
\left(\frac{\pa \tilde{S}_+}{\pa t_n}-\frac{\pa \tilde{S}_{-}}{\pa
t_n}\right)
(z) dz\\
&-\frac{1}{2\pi
i}\oint_{\mathcal{C}}\left(\log\frac{G(z)}{z}\right)
\left(\frac{\pa S_+}{\pa t_n}-\frac{\pa S_{-}}{\pa t_n}\right) (z)
dz.
\end{align*}
For $n\geq 1$, since $\log (F(z)/z)$ and $\pa \tilde{S}_+/\pa t_n$
are both holomorphic functions on $\Omega^+$, Cauchy integral
formula implies that the first integral vanishes. Similarly, for
the second integral, we are left with
\begin{align*}
-\frac{1}{2\pi
i}\oint_{\mathcal{C}}\left(\log\frac{G(z)}{z}\right)(nz^{n-1})dz
=n\kappa_{n,0}.
\end{align*}
The cases  $n\leq 0$ are proved similarly.
\end{proof}
Since the Grunsky coefficients are symmetric, Propositions
\ref{Prop4} and \ref{Prop5} implies that there exists a function
$\F$ on $\Homeo_C(S^1)$ such that $\pa\F/\pa t_n=v_n$. We are
going to define this function in the next section.

\subsection{Tau function}
First we define the following two functions. In a neighborhood of
$\infty$, they are given by
\begin{align*}
\Psi(z)=\sum_{n=1}^{\infty}
\frac{v_{-n}}{n}z^{n},\hspace{2cm}\Phi(z) =\sum_{n=1}^{\infty}
\frac{v_n}{n} z^{-n}.
\end{align*}
Since $\Psi'(z)= -\tilde{S}_+(z)$, $\Psi(z)$ can be analytically
continued to a holomorphic function on $\Omega^+$. Analogously,
$\Phi'(z)=S_{-}(z)+t_0/z$ implies that $\Phi(z)$ can be
analytically continued to be a holomorphic function on $\Omega^-$.

The tau function  is defined in the
similar way as in Section 4:
\begin{align*}
4\log \tau = 2t_0 v_0 -t_0^2 &+\frac{1}{2\pi i}\oint_{\mC}
\frac{1}{f\circ G(z)}\left(z\Phi'(z)+
2\Phi(z)\right) dz \\
&+\frac{1}{2\pi i}\oint_{\mC} \frac{g\circ
F(z)}{z^2}\left(z\Psi'(z) - 2\Psi(z)\right) dz.
\end{align*}
We use the same notation $\tau$ for the tau function since it will
not incur any confusion. We are going to show that $\F=\log \tau$
generates the functions $v_n$.

First from Corollary \ref{cor1}, we have \begin{lemma} The
variations of the functions $\Phi$ and $\Psi$ are given by
\begin{align*}
\frac{\pa \Psi}{\pa t_n}(z)&=-\mathcal{P}_n(F(z)) + n
\kappa_{n,0}, \hspace{1.3cm}
\frac{\pa \Phi}{\pa t_n}(z) =-\mathcal{P}_n(G(z))+ z^n,\\
\frac{\pa \Psi}{\pa t_0}(z)&= \log \frac{F(z)}{z}+\log
a_1,\hspace{1.5cm}\frac{\pa
\Phi}{\pa t_0}(z) = -\log \frac{G(z)}{z}-\log b,\\
\frac{\pa \Psi}{\pa t_{-n}}(z)&= -\mathcal{Q}_n(F(z))+
z^{-n},\hspace{1.5cm}\frac{\pa \Phi}{\pa
t_{-n}}(z)=-\mathcal{Q}_n(G(z))-n\kappa_{-n,0}
\end{align*}
\end{lemma}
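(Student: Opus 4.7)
The plan is to deduce all six formulas from two first-order differential identities already built into the definitions of $\Psi$ and $\Phi$, namely $\Psi'(z)=-\tilde{S}_+(z)$ on $\Omega^+$ and $\Phi'(z)=S_-(z)+t_0/z$ on $\Omega^-$ (the second comes from matching the expansion \eqref{eq8_4_4} of $S_-$ at infinity with the defining series of $\Phi$), combined with the variational formulas for $S_\pm$ and $\tilde{S}_\pm$ already assembled in Corollary \ref{cor1}.

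For each $n\in\Z$ I would commute $\pa/\pa t_n$ with $\pa/\pa z$ and substitute Corollary \ref{cor1}, obtaining
\[
\frac{\pa}{\pa z}\frac{\pa\Psi}{\pa t_n}(z)=-\frac{\pa\tilde{S}_+}{\pa t_n}(z),\qquad
\frac{\pa}{\pa z}\frac{\pa\Phi}{\pa t_n}(z)=\frac{\pa S_-}{\pa t_n}(z)+\delta_{n,0}\,\frac{1}{z}.
\]
By direct inspection of Corollary \ref{cor1} each right-hand side is itself the $z$-derivative of an explicit expression built from $\mathcal{P}_n(F(z))$, $\mathcal{Q}_n(F(z))$, $\mathcal{P}_n(G(z))$, $\mathcal{Q}_n(G(z))$, or (when $n=0$) from $\log(F(z)/z)$ and $\log(G(z)/z)$. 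A single antidifferentiation in $z$ therefore determines $\pa\Psi/\pa t_n$ on $\Omega^+$ and $\pa\Phi/\pa t_n$ on $\Omega^-$, each up to one additive constant. The pole terms $z^n$ and $z^{-n}$ that appear in the statement are not produced at this step; they already appear in Corollary \ref{cor1} because $\mathcal{P}_n(G(z))$ has a pole of order $n$ at $\infty$ and $\mathcal{Q}_n(F(z))$ has a pole of order $n$ at $0$, and those poles must be split off before the integration.

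The integration constants are pinned down by the two normalizations $(\pa\Psi/\pa t_n)(0)=0$ and $\lim_{z\to\infty}(\pa\Phi/\pa t_n)(z)=0$, both of which are obvious from the defining series of $\Psi$ and $\Phi$ (neither has a constant term). To evaluate the antiderivatives at these points I would use the Faber polynomial expansions recorded in \eqref{iden1} applied to the pair $(F,G)$: the identity $\mathcal{P}_n(F(0))=n\kappa_{n,0}$ forces the constant $n\kappa_{n,0}$ in the $\pa\Psi/\pa t_n$ formula; $\mathcal{Q}_n(G(\infty))=-n\kappa_{-n,0}$ forces $-n\kappa_{-n,0}$ in $\pa\Phi/\pa t_{-n}$; and the leading behavior $\log(F(z)/z)\to-\log a_1$ as $z\to 0$, $\log(G(z)/z)\to-\log b$ as $z\to\infty$, produce the logarithmic constants in the two $t_0$ formulas.

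The main thing to check, and really the only step that is not purely formal, is that the $t_0$-antiderivatives make sense as single-valued holomorphic functions on $\Omega^+$ and $\Omega^-$. This is immediate once one notices that $F'(z)/F(z)-1/z$ and $G'(z)/G(z)-1/z$ have only removable singularities at $z=0$ and $z=\infty$ respectively (because $F$ has a simple zero at $0$ and $G(z)/z\to 1/b\ne0$ at $\infty$), so $\log(F(z)/z)$ and $\log(G(z)/z)$ are well defined on $\Omega^+$ and $\Omega^-$. With that observation in hand no further obstacle remains and the six identities drop out by the same antidifferentiation-and-normalization template applied six times.
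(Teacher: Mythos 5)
Your proposal is correct and follows essentially the same route as the paper, whose entire proof consists of invoking Corollary \ref{cor1} through the relations $\Psi'=-\tilde{S}_+$ and $\Phi'=S_-+t_0/z$ and then fixing the integration constants by comparing power series expansions near the origin or $\infty$. (Incidentally, your method carried out carefully yields $\pa\Psi/\pa t_0=-\log(F(z)/z)-\log a_1$, which is what the proof of Proposition \ref{Prop4} and the relation $\pa v_{-m}/\pa t_0=m\kappa_{0,-m}$ actually give; the opposite sign printed in this entry of the lemma appears to be a typo, both versions agreeing only at $z=0$.)
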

The integration constants are found by comparing power series
expansion of both sides in a neighborhood of the origin or
$\infty$.

Now we can state our proposition.
\begin{proposition}
The tau function generates the functions $v_n$ on
$\Homeo_{C}(S^1)$, namely
\begin{align*}
\frac{\pa \log\tau}{\pa t_n} = v_n ,
\end{align*}
for all $ n\in \Z$.
\end{proposition}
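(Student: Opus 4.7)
The proof follows the same strategy as Proposition \ref{thm1}: differentiate $4\log\tau$ with respect to $t_n$ and verify the result equals $4v_n$. I carry out the case $n\geq 1$; the cases $n=0$ and $n\leq -1$ are analogous, invoking the remaining entries of Corollary \ref{cor1} and the preceding lemma. Writing the two contour integrals in the definition of $4\log\tau$ as $I_1$ and $I_2$,
\[
4\frac{\partial\log\tau}{\partial t_n}=2t_0\frac{\partial v_0}{\partial t_n}+\frac{\partial I_1}{\partial t_n}+\frac{\partial I_2}{\partial t_n},
\]
and Proposition \ref{Prop5} gives $2t_0\,\partial v_0/\partial t_n=2nt_0\kappa_{n,0}$, so it remains to show $\partial I_1/\partial t_n+\partial I_2/\partial t_n=4v_n-2nt_0\kappa_{n,0}$.

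To differentiate $I_1,I_2$ I combine the calculus formula \eqref{calculus} with the decomposition \eqref{eq8_4_5}: on $\mathcal{C}$, $1/(f\circ G)=S_+-S_-$ and $g\circ F/z^2=\tilde S_+-\tilde S_-$. Each summand $S_\pm\cdot(z\Phi'+2\Phi)$ or $\tilde S_\pm\cdot(z\Psi'-2\Psi)$ is the boundary value of a function holomorphic on one side of $\mathcal{C}$, so choosing that holomorphic extension makes the $\bar z$-derivative vanish on $\mathcal{C}$ and the moving-contour term in \eqref{calculus} drops. The derivative then splits into a weight-variation part (where $\partial_{t_n}$ falls on $S_\pm,\tilde S_\pm$) and a density-variation part (where it falls on $\Phi,\Psi$). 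For the weight part, Corollary \ref{cor1} supplies the decisive simplifications $\partial_{t_n}S_+=nz^{n-1}$ and $\partial_{t_n}\tilde S_-=0$, while the remaining $\partial_{t_n}S_-$ and $\partial_{t_n}\tilde S_+$ pieces pair with $z\Phi'+2\Phi$ and $z\Psi'-2\Psi$ into integrands holomorphic on $\Omega^-$ and $\Omega^+$ respectively; contour deformation then leaves only the residues from pairing $nz^{n-1}$ with the Laurent expansion of $\Phi$ at $\infty$ and the analogous pairing for $\Psi$, producing a net weight-variation contribution of $-(n-2)v_n$, parallel to the first step in Proposition \ref{thm1}.

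For the density-variation part, the preceding lemma gives $\partial_{t_n}\Phi(z)=-\mathcal{P}_n(G(z))+z^n$ and $\partial_{t_n}\Psi(z)=-\mathcal{P}_n(F(z))+n\kappa_{n,0}$. Substituting these and using the Cauchy-type identities \eqref{iden4} for the Faber polynomials, the integrals split into explicit residues plus an exact differential $d(\text{single-valued function on }\mathcal{C})$ whose contour integral vanishes because $\mathcal{C}$ is closed; this is the analog of the collapse $\oint_\mathcal{C}d(G(z)P_n(z)/F(z))=0$ used in Proposition \ref{thm1}, and yields the density-variation contribution $(n+2)v_n-2nt_0\kappa_{n,0}$. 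Summing the three pieces gives
\[
4\frac{\partial\log\tau}{\partial t_n}=2nt_0\kappa_{n,0}-(n-2)v_n+(n+2)v_n-2nt_0\kappa_{n,0}=4v_n,
\]
as required. The main technical hurdle is verifying that the moving-contour term in \eqref{calculus} vanishes for each summand, which is exactly what the $S_\pm/\tilde S_\pm$ splitting accomplishes before any differentiation is performed; once it is in place, the residue bookkeeping parallels Proposition \ref{thm1} line by line.
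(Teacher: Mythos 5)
Your proposal is correct and follows exactly the route the paper intends: the paper's own proof is only the two-sentence remark that one argues as in Proposition \ref{thm1}, replacing Fourier expansions on $S^1$ by the Cauchy integral formula and residue calculus via the $S_\pm$, $\tilde S_\pm$ splitting as in Proposition \ref{Prop5}. You have simply written out that argument in full, and the bookkeeping ($-(n-2)v_n$ from the weight variation, $(n+2)v_n-2nt_0\kappa_{n,0}$ from the density variation, $2nt_0\kappa_{n,0}$ from the $2t_0v_0$ term) matches the template of Proposition \ref{thm1} with $b_{n,0}$ correctly replaced by $\kappa_{n,0}$.
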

\begin{proof}
The proof is similar to the proof of Proposition \ref{thm1}.
Instead of using series expansion on $S^1$, we use the Cauchy
integral formula and residue calculus as we do in the proof of
Proposition \ref{Prop5}.
\end{proof}

From this, we conclude as Theorem \ref{th1} that \begin{theorem}
The
evolution of the conformal mappings $(g, f)$ with respect to the
variables $t_n$ satisfies the dispersionless Toda hierarchy.\end{theorem}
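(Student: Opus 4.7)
The plan is to invoke Proposition \ref{Hirota} applied to the pair $(\mathfrak{F}, \mathfrak{G}) = (F, G) = (f^{-1}, g^{-1})$, whose generalized Grunsky coefficients are the $\kappa_{m,n}$ introduced in this section. Since the conclusion of Proposition \ref{Hirota} then reads that $(\mathfrak{G}^{-1}, \mathfrak{F}^{-1}) = (g, f)$ satisfies the dispersionless Toda hierarchy, this will directly give the theorem.

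The input required by Proposition \ref{Hirota} is a function $\F$ of the time variables $\boldsymbol{t} = \{t_n\}_{n\in\Z}$ such that its mixed second partial derivatives reproduce the Grunsky coefficients $\kappa_{m,n}$ with the prescribed signs and weights. The candidate for $\F$ is already in hand: take $\F = \log \tau$, where $\tau$ is the tau function constructed in Section 5. First I would combine the proposition immediately preceding the theorem, which gives $\pa \log \tau/\pa t_n = v_n$ for all $n \in \Z$, with Propositions \ref{Prop4} and \ref{Prop5}, which compute $\pa v_m/\pa t_n$ in terms of $\kappa_{n,m}$. Putting these together yields
\begin{align*}
\frac{\pa^2 \F}{\pa t_m \pa t_n} = \begin{cases} -|mn|\kappa_{m,n}, & m\neq 0, n\neq 0,\\ |m|\kappa_{m,0}, & m\neq 0, n=0,\\ -2\kappa_{0,0}, & m=n=0, \end{cases}
\end{align*}
using the symmetry $\kappa_{m,n} = \kappa_{n,m}$ of the Grunsky coefficients to confirm that the mixed partial derivatives agree (an a posteriori consistency check on our constructions).

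With this identity established, the hypothesis of Proposition \ref{Hirota} holds for the pair $(F, G)$. Invoking the proposition and unwinding the notation, the pair of inverse functions $(G^{-1}, F^{-1}) = (g, f)$ satisfies the Lax equations \eqref{Lax} of the dispersionless Toda hierarchy with respect to the variables $t_n$, which is exactly the claim.

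The routine but slightly delicate step is checking the $n=0$ and boundary cases, where the formulas in Propositions \ref{Prop4} and \ref{Prop5} involve $\log a_1$ and $\log b$; here one uses the normalization $a_1 b = 1$ to reconcile $\pa v_0/\pa t_0 = -2\log a_1 = 2\log b$ with $-2\kappa_{0,0}$. No other serious obstacle is expected, since the heavy analytic work was already carried out in proving that $v_n$ and $\log\tau$ are related as stated and that the variational formulas for $v_m$ are given by the $\kappa_{n,m}$; the present theorem is the formal packaging of those results through Proposition \ref{Hirota}.
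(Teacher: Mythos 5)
Your proposal matches the paper's own argument: the paper likewise combines the proposition $\pa\log\tau/\pa t_n = v_n$ with Propositions \ref{Prop4} and \ref{Prop5} to obtain the second-derivative formula for $\F=\log\tau$ in terms of the $\kappa_{m,n}$, and then invokes Proposition \ref{Hirota} for the pair $(F,G)$ to conclude that $(G^{-1},F^{-1})=(g,f)$ solves the hierarchy, exactly as in Theorem \ref{th1}. Your check that $-2\kappa_{0,0}=2\log b=-2\log a_1$ via $a_1b=1$ is the right reconciliation of the $m=n=0$ case.
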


\subsection{Symmetry of the  variables under $z\mapsto 1/z$ transformation}
We study the symmetries of the coordinates $t_n$. The
transformation $z\mapsto 1/z$ on the complex plane will
interchange the interior and exterior. The domain $\Omega^+$ that
contains the origin will be mapped to the domain $1/\Omega^+$ that
contains ${\infty}$. If $f$ and $g$ are the Riemann mappings of
$\Omega^+$ and $\Omega^-$ such that $f'(0)g'(\infty)=1$, then
$\tilde{f}(z) =1/g(z^{-1})$ and $\tilde{g}(z) =1/f(z^{-1})$ are
the Riemann mappings of the domains $1/\Omega^-$ and $1/\Omega^+$,
and $\tilde{f}'(0)\tilde{g}'(\infty)=1$.
Observe that
\begin{align*}
t_{n} &= \frac{1}{2\pi in } \oint_{S^1}
\frac{g(w)^{-n}}{f(w)}dg(w) =\frac{1}{2\pi in } \oint_{S^1}
\frac{(g(w^{-1}))^{-n}}{f(w^{-1})}dg(w^{-1}) \\
&=\frac{-1}{2\pi in } \oint_{S^1}
{(\tilde{f}(w))^{n-2}}{\tilde{g}(w)} d\tilde{f}(w)
\end{align*}
and similarly
\begin{align*}
t_{-n}&=\frac{1}{2\pi in } \oint_{S^1}\frac
{(\tilde{g}(w))^{-n}}{\tilde{f}(w)} d\tilde{g}(w),\\
v_n &=\frac{-1}{2\pi i } \oint_{S^1}
{(\tilde{f}(w))^{-n-2}}{\tilde{g}(w)}
d\tilde{f}(w),\hspace{0.5cm}v_{-n} =\frac{1}{2\pi i } \oint_{S^1}
\frac{(\tilde{g}(w))^{n}}{\tilde{f}(w)} d\tilde{g}(w).
\end{align*}
This agrees with our observation in \cite{Teo03} that if $(\mL,
\tilde{\mL})$ is a solution of the dispersionless Toda hierarchy,
then $(\mL', \tilde{\mL}')$, where $\mL'(p) =
1/\tilde{\mL}(p^{-1})$, $\tilde{\mL}'(p) =1/\mL(p^{-1})$, is also
a solution  of the dispersionless Toda hierarchy, when we define the new time variables
as $t_n' = t_{-n}$ and $t_{-n}'=t_n$, $n\geq 1$.  The same kind of
symmetry also appears in the variables $t_n$ in Section \ref{inv}.

\section{Riemann Hilbert data}

In the language of Takesaki and Takebe (see \cite{TT1} and the
references therein), to every solution of the dispersionless Toda
hierarchy, one can associate a Riemann Hilbert data (or called the
twistor data). Namely there exist two pairs of functions $(r,h)$
and $(\tilde{r}, \tilde{h})$ of the variable $p$ and $t_0$ such
that
\begin{align}\label{RH}
\{r, h\}_T &= r, \hspace{3cm} \{\tilde{r}, \tilde{h}\}_T=
\tilde{r} , \\
r(\mL, \mM) &= \tilde{r}(\tilde{\mL}, \tilde{\mM}), \hspace{1.5cm}
h(\mL, \mM) = \tilde{h}(\tilde{\mL}, \tilde{\mM}).\nonumber
\end{align}
Here $\mM$ and $\tilde{\mM}$ are the Orlov-Schulman functions.
They are defined so that they can be written as
\begin{align}\label{orlov}
\mM&= \sum_{n=1}^{\infty} nt_n \mL^{n} + t_0 + \sum_{n=1}^{\infty}
v_n \mL^{-n} ,\\
\tilde{\mM} &=\sum_{n=1}^{\infty} -nt_{-n} \tilde{\mL}^{-n} + t_0
+ \sum_{n=1}^{\infty} -v_{-n} \tilde{\mL}^n;\nonumber
\end{align}
and they form a canonical pair with $\mL$ and $\tilde{\mL}$, i.e.
$\{\mL, \mM\}_T=\mL$ and $\{\tilde{\mL},
\tilde{\mM}\}_T=\tilde{\mL}$. Conversely, Takasaki and Takebe also
showed that if $(\mL, \tilde{\mL})$ are formal power series of the
form \eqref{series}, $\mM ,\tilde{\mM}$ are formal functions of
the form \eqref{orlov}, and there exists $(r, h)$ , $(\tilde{r},
\tilde{h})$ satisfying \eqref{RH}, then $(\mL, \tilde{\mL})$ is a
solution to the dispersionless Toda hierarchy. The proof is
formal. The main technique is comparing powers of $p$. However
nothing is assumed about the convergence of the series. For the
 conformal welding problem we are studying, if we define the functions
$\mM$ and $\tilde{\mM}$ as \eqref{orlov}, setting $\mL=g,
\tilde{\mL}=f$, then from the definition of $t_n$ and $v_n$, we
obtain the following relations:
\begin{align}\label{rel}
\mL^{-1}\mM  =\tilde{\mL}^{-1},
\hspace{3cm}\tilde{\mL}\tilde{\mM}=\mL,
\end{align}
on the unit circle. This is exactly the Riemann Hilbert problem
studied by Takasaki \cite{TT3} in connection to string theory.
Here the superscript $-1$ on functions mean the reciprocal. Since $\left\{\mL, \mM\right\}=\mL$, we have the string equation
\begin{align}\label{eq8_5_8}
\left\{ \mL, \tilde{\mL}^{-1}\right\} = \left\{ \mL, \mL^{-1}\mM\right\}= \mL^{-1}\left\{\mL, \mM\right\}=1.
\end{align}The
equations \eqref{rel} are formally equivalent to
\begin{align}\label{rel2}
\mL \mM^{-1}  =\tilde{\mL}, \hspace{3cm} \mM=\tilde{\mM}.
\end{align}
Hence the associated Riemann Hilbert data can be written as $r(p
,t_0) = pt_0^{-1}$, $h(p, t_0)=t_0$, $\tilde{r}(p, t_0) =p$ and
$\tilde{h}(p, t_0)=t_0$. Since our $f$ and $g$ are $C^1$ functions and have
absolutely convergent power series on the unit circle, the proof
of Takasaki and Takebe can be adapted to our case to show that
formally, $(g, f)$ is a solution of the dispersionless Toda
hierarchy, provided we are allowed to differentiate all power
series term by term formally on $S^1$. However, since $t_n$ are
not just formal variables, our proof has shown that everything is
correct even analytically.

On the other hand, we see that the inverse functions $\mL = g^{-1},
\tilde{\mL} = f^{-1}$ and $\mM,
\tilde{\mM}$ defined by \eqref{orlov} also satisfy the same
relation \eqref{rel}, but on the curve $\mathcal{C}$. Hence
by no means we can apply the proof of Takasaki and Takebe to
conclude that $(g^{-1}, f^{-1})$ is a solution of the
dispersionless Toda hierarchy, since the domain where their power
series converge do not overlap in general. However, we have seen
analytically that they indeed satisfy the dispersionless Toda
hierarchy, and the proof is even simpler than the case of $(g, f)$.

Our interest in the inverse mappings $(g^{-1}, f^{-1})$ comes from the
definition of the variables $t_n$. They are related to the
Fourier coefficients of the homeomorphisms on the unit circle.
However, these variables do not have a natural complex structure.
In fact, $\Homeo_C(S^1)$ is an 'odd' dimensional space since the
homogenous space $S^1\bk\Homeo_C(S^1)$ has a complex structure.
We can choose a slice of $S^1\bk\Homeo_C(S^1)$ in $\Homeo_C(S^1)$
by imposing the condition that the coordinate $t_0$ is real.
However, this condition will not be preserved under the $t_n$
flows. On the other hand, $t_{-n}$ in some sense is the complex
conjugate of $t_n$, since
\[
\ov{\gamma^{-1}}(w)=\frac{1}{\gamma^{-1}(w^{-1})}.
\]
We are going to see in the Appendix that in our choice of
variables $t_n$, the subgroup of linear fractional
transformations in $\Homeo_C(S^1)$ correspond to the three
dimensional subspace defined by $t_n =0$ for all $n\geq 2$.

Here we will also like to compare our integrable structure on
conformal welding with the integrable structure of conformal maps
observed by Wiegmann and Zabrodin \cite{WZ, KKMWZ, MWZ}. In the
approach of Wiegman and Zabrodin, the coordinates $t_n$ are
defined as harmonic moments. Given a domain $\Omega^+$ containing $\infty$ and bounded by
an analytic curve $\mathcal{C}$, define
\begin{align*}
t_n &=\frac{1}{2 \pi i}\oint_{\mathcal{C}} z^{-n} \z dz,
\hspace{1cm} v_n =\frac{1}{2\pi i} \oint_{\mC} z^n \z dz,\hspace{1cm}n\geq 1;\\
t_0&=\frac{1}{2\pi i} \oint_{\mC} \z dz , \hspace{1.8cm}
v_0=\frac{1}{\pi }\iint\limits_{\Omega^+} \log |z| d^2z,
\end{align*}
and for $n\geq 1$, $t_{-n}=-\bar{t}_n$, $v_{-n}=-\bar{v}_n$. Let
$g$ be the Riemann mapping from the exterior disc to the
domain $\Omega^+$ normalized so that $g(\infty)=\infty$,
$g'(\infty) >0$. Let
\begin{align}\label{sol}
\mL(z)=g(z),\hspace{1cm}\text{and}\hspace{1cm}
\tilde{\mL}(z)=\frac{1}{\bar{g}(z^{-1})}.
 \end{align}
 Wiegmann and Zabrodin show
that $(\mL, \tilde{\mL})$ is a solution to the  dispersionless
Toda hierarchy. Their solution also satisfies the relation
\eqref{rel2}. Now the close relations between the three solutions
become apparent. In fact, from the relation \eqref{rel}, the
dependence of $v_m$ on the variables $t_n$ are uniquely
determined. Hence although the $t_n$ and $v_m$ have different
interpretation in each of the approach, considering formally as
functions of $t_n$, they in fact correspond to the same solution
of the hierarchy. It is quite remarkable that we can define
appropriate time variables   on each of these problems, such that they
are solutions of the same Riemann Hilbert problem. The Riemann
Hilbert data \eqref{rel2} is significance for it implies the
  string equation \eqref{eq8_5_8}.
In fact,  considering generalized inverse potential
problem, Zabrodin has shown in \cite{Z} that there are other ways to define
$t_n$ and $v_n$ so that \eqref{sol} is a solution of the
dispersionless Toda hierarchy. They correspond to different
Riemann Hilbert data. Similarly, we can also choose other time
variables so that the evolution of  the interior and exterior
mappings $f$ and $g$ and their inverses satisfy the dispersionless
Toda hierarchy.

The solution of Wiegmann and Zabrodin has the advantage that it
allows the introduction of a complex structure on the space of
analytic curves. More precisely, on each slice with constant
$t_0$, $-t_{-n}$ is defined to be the complex conjugate of $t_n$.
Moreover, as is observed by Takhtajan in \cite{LT}, this gives a
close analogy to conformal field theory. The tau function for
analytic curves is a partition function for the theory of free
bosons on the space of analytic curves, and
$$\left\langle \frac{\pa
}{\pa t_n}, \frac{\pa }{\pa t_m}\right\rangle=\frac{\pa^2\log \tau}{\pa
t_n \pa \bar{t}_m}
$$
defines a Hermitian metric on each constant $t_0$ slice. Another
remarkable feature about the solution of Wiegmann and Zabrodin is
that they are closely related to the Dirichlet boundary value
problem (see \cite{MWZ}).

On the other hand, the solution of Wiegmann and Zabrodin cannot
study the evolutions of the interior and exterior mappings using
the same time variables. They have to introduce different time variables for
the interior mapping problem and exterior mapping problem and these variables
are related by a Legendre transformation (see \cite{MWZ}), which
is similar to the symmetry of coordinates we discuss in Section
5.3. It will be interesting to investigate whether we can choose
suitable time variables such that the solutions to the conformal
welding problem also enjoy some of the good features of the
solution of Wiegmann and Zabrodin. This is considered in the next section.

\section{Extension of the conformal mapping  problem}

In this section, we construct a theory that contains both our
solution to the conformal welding problem and the solution of
Wiegmann--Zabrodin to conformal mapping problem. We define the
following spaces:
\begin{align*}
\mathfrak{S}=\Bigl\{ &f: \mathbb{D} \rightarrow \C \;\text{univalent}
\;\bigr\vert\;
f(z)=a_1 z+a_2z^2+ \ldots; a_1\neq0; \\
&f \;\text{is extendable to a $C^1$   homeomorphism
of the plane}.\Bigr\},\\
\Sigma=\Bigl\{ &g: \mathbb{D}^* \rightarrow \C \;\text{univalent}
\;\bigr\vert\;
g(z)=b z+b_0 + b_1z^{-1}+\ldots; b\neq0; \\
&g \;\text{is extendable to a $C^1$   homeomorphism
of the plane}.\Bigr\},\\
\mathfrak{D}=&\left\{(f,g)\; \bigr\vert\; f\in \mathfrak{S}, g\in \Sigma;\;
f'(0)g'(\infty) = a_1b=1 .\right\}.
\end{align*}
We define the following functions on $\mathfrak{D}$:
\begin{align*}
t_n&=\frac{1}{2\pi i
n}\oint_{S^1}\frac{(g(w))^{-n}}{f(w)}dg(w),\hspace{0.5cm} t_{-n}
=\frac{-1}{2\pi in}\oint_{S^1} g(w) (f(w))^{n-2} df(w),
\\
v_n &=\frac{1}{2\pi i }\oint_{S^1}\frac{(g(w))^n}{f(w)}dg(w),
\hspace{0.5cm} v_{-n} =\frac{-1}{2\pi i}\oint_{S^1} g(w)
(f(w))^{-n-2} df(w), \hspace{0.5cm}n\geq 1,\\
t_0 &=\frac{1}{2\pi i
}\oint_{S^1}\frac{1}{f(w)}dg(w)=\frac{1}{2\pi i} \oint_{S^1}\frac{g(w)}{(f(w))^2} df(w), \\
v_0&=\frac{1}{2\pi i} \oint_{S^1}\left(\log\frac{g(w)}{w}\right)
\frac{1}{f(w)} dg(w)-\frac{1}{2\pi i} \oint_{S^1}\left( \log
\frac{f(w)}{w}\right)
\frac{g(w)}{f(w)^2}df(w) \\
&\hspace{1cm}-\frac{1}{2\pi i} \oint_{S^1} \frac{g(w)}{f(w)}
\frac{dw}{w}.
\end{align*}

As in Section 5, we can prove that $\{t_n ,
\bar{t}_n, n\in \Z\}$ is a complete set of local coordinates on
$\mathfrak{D}$.

Let $\Omega_1^+= f(\mathbb{D})$ and $\Omega_1^-$ its exterior. Let
$\Omega_2^-=g(\mathbb{D}^*)$ and $\Omega_2^+$ its exterior. We
 define the following functions:
\begin{align*}
S_{\pm}(z) =\frac{1}{2\pi i} \oint_{S^1} \frac{(1/f)(w)
g'(w)}{g(w)-z} dw, \hspace{1cm} z\in \Omega_2^{\pm},\\
\tilde{S}_{\pm}(z)=\frac{1}{2\pi i} \oint_{S^1} \frac{g(w)
f'(w)}{f(w)^2(f(w)-z)} dw,\hspace{1cm} z\in \Omega_1^{\pm} .
\end{align*}
They are holomorphic functions in the respective domains. In a
neighborhood of the origin or $\infty$, they have the series
expansion
\begin{align*}
S_+(z) &= \sum_{n=1} nt_n z^{n-1}, \hspace{3cm} \tilde{S}_{+} (z)
= -\sum_{n=1}^{\infty} v_{-n} z^{n-1},\\
S_-(z) &=-t_0 z^{-1} -\sum_{n=1}^{\infty}
v_nz^{-n-1},\hspace{1.5cm} \tilde{S}_{-}(z) = -t_0z^{-1}
+\sum_{n=1}^{\infty} nt_{-n} z^{-n-1}.
\end{align*}

 Let
$F$ and $G$ be the inverse functions of $f$ and $g$ respectively.
Using Lemma \ref{lemma3}, which still holds for $(f,g)\in
\mathfrak{D}$, we can prove as in Propositions \ref{Prop4} and
\ref{Prop5} that
\begin{proposition}
Let $\kappa_{n,m}$ be the generalized Grunsky coefficients of the
pair $(F, G)$. The variation of the function $v_m$ with
respect to the coordinates $t_n$, $n \in \Z$ is given by the
following. For $m\neq 0$,
\begin{align*}
\frac{\pa v_m}{\pa t_n} &= -|mn|\kappa_{n,m} , \hspace{0.5cm}
n\neq 0,\hspace{1cm}\frac{\pa v_m}{\pa t_0} = |m|\kappa_{0,m},
\end{align*}
and for $m=0$,
\begin{align*}
\frac{\pa v_0}{\pa t_n}&= |n|\kappa_{n,0},\hspace{0.5cm} n\neq
0,\hspace{1cm} \frac{\pa v_0}{\pa t_0} = 2\log b,
\end{align*}
and
\begin{align*}
\frac{\pa v_m}{\pa \bar{t}_n}&=0, \hspace{1cm}\text{for all}\;
m,n.
\end{align*}
\end{proposition}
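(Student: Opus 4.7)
The plan is to follow the proofs of Propositions \ref{Prop4} and \ref{Prop5} almost verbatim, with modifications to accommodate the fact that in the extended setting $\mathfrak{D}$, the curves $f(S^1)$ and $g(S^1)$ are no longer identified. The key identity of Lemma \ref{lemma3} remains valid on $\mathfrak{D}$, because its proof used only the algebraic consequences of $\gamma_t \circ \gamma_t^{-1} = \mathrm{id}$ on $S^1$, where $\gamma_t = g_t^{-1} \circ f_t$; no identification of the image curves is needed.

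For the $t_n$-variations with $n \geq 1$, I would introduce the Laurent series
\begin{align*}
\frac{(\pa (g \circ F)/\pa t_n)\circ f(w)\,f'(w)}{(f(w))^2} = \frac{\pa}{\pa t_n}\left(\frac{1}{f \circ G}\right) \circ g(w)\, g'(w) = \sum_{m \in \Z} \alpha_{n,m}\, w^m, \quad w \in S^1,
\end{align*}
which is well defined by Lemma \ref{lemma3}. Direct differentiation of the series expansions of $S_{\pm}$ and $\tilde{S}_{\pm}$ yields $\pa S_+/\pa t_n = n z^{n-1}$ and $\pa \tilde{S}_-/\pa t_n = 0$. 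The Sokhotski--Plemelj identities $(1/f) \circ G = S_+ - S_-$ on $g(S^1)$ and $(g \circ F)(z)/z^2 = \tilde{S}_+ - \tilde{S}_-$ on $f(S^1)$ then let me express $\pa S_-/\pa t_n \circ g$ and $\pa \tilde{S}_+/\pa t_n \circ f$ as contour integrals over $S^1$; expanding these with \eqref{iden4} and invoking Lemma \ref{ess} forces most $\alpha_{n,m}$ to vanish and identifies the rest with coefficients of the Faber polynomials $\mathcal{P}_n, \mathcal{Q}_n$ of the pair $(F, G)$ (not of $(f, g)$). Comparing with \eqref{iden1} gives $\pa v_m/\pa t_n = -|mn|\kappa_{n,m}$. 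The cases $n \leq -1$, $n = 0$, and $m = 0$ follow by the same arguments applied to $\tilde{S}_-$ and to the logarithmic variations, with $v_0$ handled by the residue/Cauchy computation that parallels Proposition \ref{Prop5}; for $n = 0$ the $2 \log b$ drops out from the $z^{-1}$ residue of $\pa S_-/\pa t_0 = -G'(z)/G(z)$ as in Corollary \ref{cor1}.

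For the vanishing $\pa v_m/\pa \bar{t}_n = 0$, the observation is structural: each $v_m$ is a contour integral whose integrand is an explicit polynomial expression in $f, g, f', g'$ (together with $\log(f/w)$ and $\log(g/w)$ in the case $m = 0$) and involves no complex conjugates. Hence $v_m$ is holomorphic in the Taylor/Laurent coefficients $(a_k, b_k)$ of $(f, g)$. Since $\{t_n\}_{n \in \Z}$ itself constitutes a holomorphic change of coordinates on $\mathfrak{D}$, composition gives $\pa v_m/\pa \bar{t}_n = 0$.

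The main obstacle is bookkeeping: one must consistently keep track of the two distinct curves $f(S^1)$ and $g(S^1)$ and the four domains $\Omega_1^{\pm}, \Omega_2^{\pm}$, since the single welding curve $\mathcal{C}$ no longer unifies the picture. The bridging between the two sides is furnished precisely by Lemma \ref{lemma3}, which lets the holomorphic extensions on $\Omega_1^+$ and $\Omega_2^-$ combine into a single rational function on $\hat{\C}$ with the prescribed pole structure. Once this is in place, the computation of Section 5 transfers to $\mathfrak{D}$ essentially word for word, with $b_{m,n}$ replaced throughout by $\kappa_{m,n}$ and the contour integrals taken on the appropriate image curve.
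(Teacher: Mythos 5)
Your proposal is correct and follows the paper's overall strategy---transplant the arguments of Propositions \ref{Prop4} and \ref{Prop5} to $\mathfrak{D}$ via Lemma \ref{lemma3}---but it handles the one genuinely new step by a different argument. The paper's entire written proof consists of establishing
\begin{align*}
\frac{\pa S_{\pm}}{\pa t}(z)=\frac{1}{2\pi i}\oint_{S^1}\frac{\bigl(\pa\bigl((1/f)\circ G\bigr)/\pa t\bigr)\circ g(w)\,g'(w)}{g(w)-z}\,dw
\end{align*}
(and its analogue for $\tilde{S}_{\pm}$) by differentiating the defining integral over the \emph{fixed} contour $S^1$ and integrating by parts to absorb the terms produced by the $t$-dependence of $g$ inside the Cauchy kernel $1/(g(w)-z)$; this substitutes for the moving-contour formula \eqref{calculus} used in Section 5, and once it is in hand the paper declares the rest identical. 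You instead compute $\pa S_{+}/\pa t_n$ and $\pa\tilde{S}_{-}/\pa t_n$ by termwise differentiation of the series and recover $\pa S_{-}/\pa t_n$, $\pa\tilde{S}_{+}/\pa t_n$ from the Plemelj jump relations. That route does work: differentiating $1/f_t(w)=S^t_{+}(g_t(w))-S^t_{-}(g_t(w))$ in $t$ and using $\pa_t G\circ g=-\pa_t g/g'$ reproduces exactly the density $\pa_t\bigl((1/f)\circ G\bigr)\circ g$ appearing in your $\alpha_{n,m}$ expansion. However, to identify $\pa S_{\pm}/\pa t_n$ with the two Cauchy transforms of that jump you must separately verify that they are holomorphic on the moving domains $\Omega_2^{\pm}$ with the correct decay at $\infty$; the paper's integration-by-parts identity delivers this for free, so you should state that verification explicitly. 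Your argument for $\pa v_m/\pa\bar{t}_n=0$ (the defining integrals are conjugate-free, hence holomorphic in the coefficients of $f,g$, composed with a holomorphic coordinate change) is exactly what the paper intends. One caveat shared with the paper: the appeal to Lemma \ref{ess} requires the pair to be complementary, whereas $(f,g)\in\mathfrak{D}$ need not be; neither your proof nor the paper's addresses this point.
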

\begin{proof}
We observe that
\begin{align*}
\frac{\pa S_{\pm}}{\pa t} (z)=& \frac{1}{2\pi i} \oint_{S^1}
\Biggl(\frac{ \left(\pa(1/f)/\pa t\right)(w) g'(w)}{g(w) -z}
+\frac{(1/f)(w) (\pa^2 g(w)/\pa t\pa w)}{g(w)-z} \\
&\hspace{5cm}-\frac{(1/f)(w) g'(w)( \pa g/\pa
t)(w)}{(g(w)-z)^2} \Biggr)dw\\
=& \frac{1}{2\pi i} \oint_{S^1} \left(\frac{ (\pa (1/f)/\pa t)(w)
g'(w)}{g(w) -z} +\left(\frac{1}{f}\right)(w)\frac{\pa }{\pa w}
\left(\frac{\pa g(w)/\pa t}{g(w)-z}\right)\right)dw\\
=&\frac{1}{2\pi i} \oint_{S^1} \left(\frac{(\pa (1/f)/\pa
t)(w)g'(w)}{g(w) -z} - \frac{(1/f)'(w)(\pa
g(w)/\pa t)}{g(w)-z}\right)dw\\
=&\frac{1}{2\pi i} \oint_{S^1} \frac{ \left(\left(\pa
\bigl((1/f)\circ G\bigr)\right)/\pa t)\bigr)\right)\circ g(w)
g'(w)}{g(w)-z}dw.
\end{align*}
Similarly, we can show that
\begin{align*}
\frac{\pa \tilde{S}_{\pm}}{\pa t}(z) =\frac{1}{2\pi i} \oint_{S^1}
\frac{\left(\pa (g\circ F)/\pa t\right)\circ f(w) f'(w)}{(f(w))^2(f(w) -z)} dw.
\end{align*}
The rest of the proof is the same.
\end{proof}

We define the functions $\Phi$ and $\Psi$ as in Section 5. Now we
define the tau function $\tau$ to be the real valued function
given by
\begin{align*}
4\log \tau = \Bigl(2t_0 v_0 -t_0^2 &+\frac{1}{2\pi i}\oint_{S^1}
\frac{g'(w)}{f(w)}\left(g(w)\Phi'(g(w))+
2\Phi(g(w))\right) dw \\
&+\frac{1}{2\pi i}\oint_{S^1}
\frac{g(w)f'(w)}{(f(w))^2}\left(f(w)\Psi'(f(w)) -
2\Psi(f(w))\right) dw\Bigr) \\
&+ \text{complex conjugate}.
\end{align*}
Then we have
\begin{proposition}For all $n$, we have
\begin{align*}
\frac{\pa \log \tau}{\pa t_n}=v_n, \hspace{3cm}\frac{\pa \log
\tau}{\pa \bar{t}_n}=\bar{v}_n.
\end{align*}
\end{proposition}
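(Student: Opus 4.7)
The plan is to mimic the proof of Proposition~\ref{thm1}, using one preliminary observation that eliminates half the work. By the preceding proposition, each $v_m$ satisfies $\pa v_m/\pa\bar t_n=0$, so $\Phi$, $\Psi$, and $v_0$ are holomorphic functions of $\{t_n\}_{n\in\Z}$; combined with the fact (from the introduction) that the coefficients of $f$ and $g$ are holomorphic in $\{t_n\}$, this shows that the expression inside the outer parentheses of the definition of $4\log\tau$, which I will call $X$, is holomorphic in $\{t_n\}$. From $4\log\tau=X+\bar X$ one obtains
\begin{equation*}
\frac{\pa\log\tau}{\pa t_n}=\frac{1}{4}\frac{\pa X}{\pa t_n},\qquad \frac{\pa\log\tau}{\pa\bar t_n}=\frac{1}{4}\,\overline{\frac{\pa X}{\pa t_n}},
\end{equation*}
so the identity for $\bar t_n$ follows from that for $t_n$ by complex conjugation, and it remains to prove $\pa X/\pa t_n=4v_n$.

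I would then compute $\pa X/\pa t_n$ by differentiating term by term under the $S^1$-contour integrals; since the contour is independent of $t_n$, no moving-contour formula like \eqref{calculus} is needed. The derivative produces two kinds of terms: those where $\pa/\pa t_n$ falls on $\Phi$ or $\Psi$, and those where it falls on the prefactors $g'(w)/f(w)$ and $g(w)f'(w)/f(w)^2$. For the first kind, substituting the variational formulas from the lemma immediately preceding this proposition and expanding $\mathcal P_n(g(\zeta))$ and $\mathcal P_n(f(\zeta))$ via the series \eqref{iden1}, and pairing them against the defining integrals of $v_n$ and $t_0$, produces the contribution $(n+2)v_n-2nt_0\kappa_{n,0}$, exactly as in the computation at the end of the proof of Proposition~\ref{thm1}. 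For the second kind, I would use the formulas for $\pa S_\pm/\pa t_n$ and $\pa\tilde S_\pm/\pa t_n$ (whose derivation in the setting of $\mathfrak D$ is the same as in the proof of the preceding proposition) and, after the change of variables $z=g(w)$ or $z=f(w)$, recognize the integrands as exact differentials of the form $d(G(z)\mathcal P_n(z)/F(z))$ plus a residue contribution giving $-(n-2)v_n$; Stokes' theorem annihilates the exact parts on the closed curves $f(S^1)$ and $g(S^1)$.

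Combining these with $\pa(2t_0v_0-t_0^2)/\pa t_n=2t_0\,\pa v_0/\pa t_n=2nt_0\kappa_{n,0}$ yields the total $4v_n$ and handles the case $n\geq 1$. The cases $n=0$ and $n\leq -1$ go through identically, using the remaining variational formulas for $\Phi$, $\Psi$, $t_0$, and $v_0$.

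The main obstacle is bookkeeping rather than genuine difficulty: on $\mathfrak D$ the two image curves $f(S^1)$ and $g(S^1)$ no longer coincide (as they did in Section~5), so each Stokes-type cancellation must be arranged on its own contour, and the Faber polynomials $\mathcal P_n$ and $\mathcal Q_n$ of the inverse pair $(F,G)$ must be matched to the correct integral. Once this tracking is organized, the proof reduces to a term-by-term transcription of the computation in Proposition~\ref{thm1}.
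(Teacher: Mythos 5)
Your proposal is correct and follows essentially the route the paper intends: the paper states this proposition without a written proof, implicitly deferring to the computation of Proposition \ref{thm1} and its Section 5 adaptation, and that is exactly what you carry out --- the contributions $(n+2)v_n-2nt_0\kappa_{n,0}$ from the terms where the derivative hits $\Phi,\Psi$, $-(n-2)v_n$ from the remaining terms (computed via $\partial S_{\pm}/\partial t_n$, $\partial\tilde S_{\pm}/\partial t_n$ and residues), and $2nt_0\kappa_{n,0}$ from $2t_0v_0-t_0^2$ do sum to $4v_n$. Your preliminary observation that the bracketed expression $X$ is holomorphic in the $t_n$ (by the preceding proposition), so that the $\bar t_n$ identity follows from $4\log\tau=X+\bar X$ by conjugation, is the right way to dispose of the second half and matches the paper's subsequent remark that $\log\tau$ is harmonic on $\mathfrak{D}$.
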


$\log \tau$ is a harmonic function on $\mathfrak{D}$. It generates
two solutions of the dispersionless Toda hierarchy. Namely if we
take $t_n$, $n\in \Z$ as the time variables, we find that $(\mL=g,
\mL=f)$ is a solution to the  dispersionless Toda hierarchy.
 Moreover, all the variations of the coefficients of $f
,g$ with respect to $\bar{t}_n$ vanish. On the other hand, if we
take $\bar{t}_n$ as the time variables, then $(\mL=\bar{g},
\tilde{\mL}=\bar{f})$ is a solution to the dispersionless Toda
hierarchy. We have introduced a complex structure on the space
$\mathfrak{D}$ such that the coefficients of $f$ and $g$ are
holomorphic functions under this complex structure.

On the subspace of $\mathfrak{D}$ defined by $f(z)
=1/\bar{g}(z^{-1})$, one finds that $b=g'(\infty)$ is real,
$t_{-n} = -\bar{t}_n$, $v_{-n}=-\bar{v}_n$, for all $n\neq 0$, and
$t_0$, $v_0$ are real. Hence $\{t_n, \bar{t}_n, n\geq 1, t_0\}$
form a set of local coordinates on this space, which is precisely
the space considered by Wiegmann and Zabrodin. It is easy to check
that the definitions of $t_n, v_n$, $n\geq 0$ coincide with the
definition of Wiegmann and Zabrodin. However, we do not say that
all the evolution equations on the space $\mathfrak{D}$ can be
carried over to its subspace. The vector fields $\pa/\pa t_n$ have
different meanings.

The subspace of $\mathfrak{D}$ defined by $f(S^1)=g(S^1)$
corresponds to the conformal welding problem we discuss in Section
5.  Hence we see that the solution of Wiegmann and Zabrodin and
our solution to the conformal welding problem actually correspond
to two different subsystem of the same system.

 \vspace{0.2cm} \noindent \textbf{Acknowledgement}\;
The author would like to thank the Ministry of Science, Technology
and Innovation of Malaysia for funding this project under
eScienceFund 06-02-01-SF0021.

\appendix
\section{The subgroup of linear fractional transformations}

Here we consider the simplest case of domains, namely discs. The
most general conformal maps $f$ and $g$ mapping $\mathbb{D}$ and
$\mathbb{D}^*$ to the interior and exterior of a circle and
satisfying our normalization conditions are given by
\begin{align*}
f(z) = \frac{1}{b}\frac{z}{1+az}, \hspace{0.5cm} |a|<1,\hspace{2cm} g(z) = b
z+c,\hspace{0.5cm} b\neq 0.
\end{align*}
The necessary condition that $f$ and $g$ map $S^1$ to the same
circle is that $\gamma=g^{-1}\circ f$ is a linear fractional
transformation on the unit circle, i.e. it belongs to
$\PSL(2,\R)$. This implies that
\begin{align*}
b = \frac{e^{\frac{i\alpha}{2}}}{\sqrt{ (1-|a|^2)}},  \;\;c= -\frac{\bar{a}e^{-\frac{i\alpha}{2}}}{\sqrt{ (1-|a|^2)}},
\hspace{0.5cm}\text{and}\hspace{0.5cm} \gamma(z)=
e^{-i\alpha}\frac{z+\bar{a}}{1+az}.
\end{align*}
For the inverse mappings $(g^{-1}, f^{-1})$, we consider the Fourier expansions:
\begin{align*}
\gamma(w) = \bar{a}e^{-i\alpha} +\sum_{n=0}^{\infty}(-1)^n
e^{-i\alpha}
a^{n} (1-|a|^2) w^{n+1},\\
 \frac{1}{\gamma^{-1}}(w) =- a+ \sum_{n=0}^{\infty}
 e^{-i(n+1)\alpha}(1-|a|^2)\bar{a}^nw^{-n-1}.
\end{align*}
The variables $t_n$ and $v_n$ are given by
\begin{align*}
t_{1}&=- a, \;t_0= e^{-i\alpha}(1-|a|^2),\;  t_{-1}=-\bar{a}
e^{-i\alpha},\;\hspace{0.5cm} t_n=0, \hspace{0.5cm}\forall |n|\geq
2,\\
v_n &=
 e^{-i(n+1)\alpha}(1-|a|^2)\bar{a}^n, \hspace{0.3cm} v_{-n}=(-1)^{n-1}e^{-i\alpha}
a^{n} (1-|a|^2), \hspace{0.5cm}n\geq 1.
\end{align*}
Hence
\begin{align*}
v_n&=(-1)^n t_0t_{-1}^n, \hspace{0.5cm} v_{-n}= -t_0t_1^n, \hspace{0.5cm}
 v_0 = t_0 \log t_0 -t_0-t_1t_{-1},
\end{align*}
and the tau function is given by
\begin{align*}
\log \tau = \frac{t_0^2}{2} \log t_0 -\frac{3}{4}
t_0^2-t_0t_1t_{-1}.
\end{align*}
The restriction $b$ being real amount to restricting $\alpha=0$.
In that case $t_0$ depend on $t_1$ and $t_{-1}$ by
$t_0=1-t_1t_{-1}$.

For the mappings $(g, f)$,
\begin{align*}
t_n =\frac{1}{2\pi in} \oint_{S^1}\frac{g(w)^{-n}}{f(w)}dg(w)
=\begin{cases}ab=
\frac{ae^{\frac{i\alpha}{2}}}{\sqrt{1-|a|^2}},\hspace{0.5cm}& \text{if}\;n=1,\\
0,\hspace{0.5cm}& \text{if}\;n\geq 2.
\end{cases}
\end{align*}
\begin{align*}
t_{-n} =\frac{-1}{2\pi in} \oint_{S^1}f(w)^{n-2}g(w)df(w)
=\begin{cases}
-c=\frac{\bar{a}e^{-\frac{i\alpha}{2}}}{\sqrt{1-|a|^2}}\hspace{0.5cm}& \text{if}\;n=1,\\
0\hspace{0.5cm}& \text{if}\;n\neq1.
\end{cases}
\end{align*}
Similarly, we find that
\begin{align*}
t_0 = b^2=& e^{i\alpha}(1-|a|^2)^{-1}, \\ v_n = b^2c^n,&\hspace{1cm}
v_{-n} = -a^n b^{n+2},\hspace{1cm}n\geq 1.
\end{align*}
Hence, as functions of $t_n$, we still have
\begin{align*}
v_n&=(-1)^n t_0t_{-1}^n, \hspace{0.5cm} v_{-1}= -t_0t_1^n, \hspace{0.5cm}
 v_0 = t_0 \log t_0 -t_0-t_1t_{-1},
\end{align*}
and the tau function is given by
\begin{align*}
\log \tau = \frac{t_0^2}{2} \log t_0 -\frac{3}{4}
t_0^2-t_0t_1t_{-1}.
\end{align*}

\end{document}